\newtheorem{thm}{thm}[section]
\newtheorem{cor}[thm]{cor}
\newtheorem{lem}[thm]{lem}
\newtheorem{prop}[thm]{prop}
\theoremstyle{definition}
\newtheorem{defn}[thm]{Definition}
\newtheorem{ass}[thm]{Assumption}
\theoremstyle{remark}
\newtheorem{rem}[thm]{Remark}
\newtheorem{example}[thm]{Example}
\numberwithin{equation}{section}
\newcommand{\set}[1]{\left\{#1\right\}}
\newcommand{\Ind}[1]{\mathbf{1}_{\left\{#1\right\}}}
\newcommand{\RR}{\mathbb{R}}
\newcommand{\PP}{\mathbb{P}}
\newcommand{\NN}{\mathbb{N}}
\newcommand{\cF}{\mathcal{F}}
\newcommand{\Rplus}{\mathbb{R}_{\geqslant 0}}
\newcommand{\Rminus}{\mathbb{R}_{\leqslant 0}}
\newcommand{\pd}[2]{\frac{\partial #1}{\partial #2}}
\newcommand{\argmin}{\operatorname{argmin}}
\newcommand{\wt}[1]{{\widetilde{#1}}}
\newcommand{\Econd}[2]{\mathbb{E}\left[\left.#1\right|#2\right]}        % conditional expectation
\newcommand{\E}[1]{\mathbb{E}\left[#1\right]}                     % expectation
\newcommand{\wh}[1]{{\widehat{#1}}}
\newcommand{\Tmax}{T_\text{max}}
\newcommand{\wmax}{w_\text{max}}
\title{Affine forward variance models}
\author{Jim Gatheral, Baruch College, CUNY,\\ {\tt jim.gatheral@baruch.cuny.edu},\\  \\ Martin Keller-Ressel, TU Dresden,\\
 {\tt Martin.Keller-Ressel@tu-dresden.de }
}
\date{\today}
\begin{document}

\maketitle

\begin{abstract}
We introduce the class of affine forward variance (AFV) models of which both the conventional Heston model and the rough Heston model are special cases. We show that AFV models can be characterized by the affine form of their cumulant generating function, which can be obtained as solution of a convolution Riccati equation. We further introduce the class of affine forward order flow intensity (AFI) models, which are structurally similar to AFV models, but driven by jump processes, and which include Hawkes-type models. We show that the cumulant generating function of an AFI model satisfies a generalized convolution Riccati equation and that a high-frequency limit of AFI models converges in distribution to the AFV model. 
\end{abstract}
\thanks{MKR gratefully acknowledges financial support from DFG grants ZUK~64 and KE~1736/1-1. We thank Masaaki Fukasawa and two anonymous referees for their insightful comments.}
%\keywords{stochastic volatility, rough volatility, Riccati equation, affine process, Hawkes process}

%{\small
%\tableofcontents
%}

%\setstretch{1.15}

\section{Introduction}

The class of affine processes, introduced in \cite{duffie2003affine}, consists of all continuous-time Markov processes taking values in $\Rplus ^m \times \RR^n$, whose log-characteristic function depends in an affine way on the initial state vector of the process.  Affine processes have proved particularly convenient for financial modeling, typically giving rise to models with tractable formulae for the values of financial claims; the perennially popular Heston model \cite{heston1993closed} is just one (and perhaps the most famous) example of such a model.

In this paper, we introduce the class of {\em affine forward variance (AFV) models} of which classical Markovian affine stochastic volatility models turn out to be a special case. By writing our model in forward variance form, we are able to provide a unique characterization of a much wider class of affine stochastic volatility models, which includes non-Markovian models, such as the rough Heston model of \cite{euch2018characteristic} or, more generally, stochastic volatility models driven by affine Volterra processes in the sense of \cite{jaber2017affine}. Our contribution is to provide necessary and sufficient conditions for a (non-Markovian) stochastic volatility model to be affine, thus adding a reverse direction to the results of \cite{jaber2017affine}, and with a simpler proof\footnote{Some of these simplifications are due to the fact that we limit ourselves to the (real-valued) moment generating function, as opposed to the (complex-valued) characteristic function, studied in \cite{euch2018characteristic}}. In essence, the rough Heston model is -- up to a choice of kernel -- the \emph{only} stochastic volatility model with an affine moment generating function.

Inspired by the original derivation \cite{euch2018characteristic} of the rough Heston model as a limit of simple pure jump models of order flow, we further introduce the class of {\em affine forward order flow intensity} (AFI) models. These model are structurally similar to affine forward variance models and generalize the simple order flow model of \cite{euch2018characteristic}, by allowing arbitrary order size distributions and more general decay of the self-excitation of order flow. We define a high-frequency limit in which such models give rise to continuous affine forward variance models. In so doing, we generalize and simplify previous such derivations. Moreover, there is a clear structural analogy between the results we prove for AFV and AFI models, adding insight to and generalizing the connection between microstructural models of order flow and stochastic volatility models first brought to light in \cite{jaisson2015limit} and \cite{jaisson2016rough}.

Our paper proceeds as follows.  In Section \ref{sec:AFVCM}, we introduce the class of affine forward variance models and show that a forward variance model has an affine cumulant generating function (CGF) if and only if it can be written in a very specific form.  We further show that the CGF can be obtained as the unique global solution of a convolution Riccati equation closely related to the Volterra-Riccati equations of \cite{jaber2017affine}.  In Section \ref{sec:AAFIM}, we introduce the class of AFI models, showing that the CGF of such models solves a \emph{generalized} convolution Riccati equation.  In Section \ref{sec:HF}, we show that AFI models become AFV models in a high-frequency limit, where order arrivals are extremely frequent and order sizes extremely small.

\section{Affine forward variance models}\label{sec:AFVCM}
\subsection{Forward variance models}
Let a probability space $(\Omega, \cF, \PP)$ with right-continuous filtration $(\cF_t)_{t \geq 0}$ and two independent, adapted Brownian motions $W$ and $W^\bot$ be given.  The filtration generated by $W$ only is denoted by $(\cF_t^W)_{t \geq 0}$. Our starting point is a generic stochastic volatility model $(S,V)$, where spot volatility $V$ is modeled by a $\cF^W$-adapted continuous, integrable, and non-negative process and the price process $S$ by
\begin{equation}\label{eq:SDE_S}
dS_t = S_t \sqrt{V_t} \left(\rho dW_t + \sqrt{1 - \rho^2} dW_t^\bot\right),
\end{equation}
for some correlation parameter $\rho \in [-1,1]$. Crucially, we \emph{do not} assume that $V$ is an Ito-process or even a semi-martingale. Instead, we focus on the family (indexed by $T > 0$) of forward variance processes
\begin{equation}\label{eq:fw_var}
\xi_t(T) := \Econd{V_T}{\cF_t} = \Econd{V_T}{\cF_t^W},
\end{equation}
which are, by definition, $\cF^W$-adapted martingales with terminal values $\xi_T(T) = V_T$. By the martingale representation theorem, there exists, for each $T > 0$, a predictable process $\eta_t(T)$ with $\int_0^T \eta_s(T)^2 ds < \infty$ a.s., such that
\begin{equation}\label{eq:SDE_xi}
d\xi_t(T) = \eta_t(T) dW_t, \quad t \in [0,T].
\end{equation}
We refer to \eqref{eq:SDE_S} together with \eqref{eq:SDE_xi} as stochastic volatility model in forward variance form, or simply as \emph{forward variance model}.
It is often convenient to use the log-price $X = \log S$ instead of $S$ and we note that it follows from \eqref{eq:SDE_S} that $X$ satisfies the SDE
\begin{equation}\label{eq:SDE_X}
dX_t = -\frac{V_t^2}{2} dt + \sqrt{V_t} \left(\rho dW_t + \sqrt{1 - \rho^2} dW_t^\bot\right).
\end{equation}
We also refer to $X$ together with the family of processes $(\xi_.(T))_{T > 0}$ as forward variance model and denote it by $(X,\xi)$.
Moreover, we use the following convention:
\[\text{For $t > T$, define } \begin{cases}\xi_t(T) := V_T\\ \eta_t(T) := 0. \end{cases} \]
which is consistent with \eqref{eq:fw_var} and allows to extend \eqref{eq:SDE_xi} to all $t \ge 0$.

Finally we introduce the following assumptions on the integrands $\eta_.(T)$:
\begin{ass}\label{ass:sigma}
\begin{enumerate}[(a)]
\item For $dt \otimes d\PP$-almost all $(t,\omega)$ it holds that $\tau \mapsto \eta_t(t+\tau,\omega)$ is non-negative, decreasing and continuous on $(0,\infty)$.
\item For any $T >0$ the integrability condition 
\begin{equation}\label{eq:int}
\int_0^T \left(\int_0^T \eta_r(s)^2 dr \right)^{1/2} ds < \infty
\end{equation}
holds almost surely.
\end{enumerate}
\end{ass}

We will show that in the class of \emph{affine} forward variance models, the integrand $\eta_t(T)$ must factor as $\eta_t(T) = \sqrt{V_t} \kappa(T-t)$, where $\kappa$ is a deterministic function. To describe the admissible functions we introduce the following:
\begin{defn}\label{def:kernel}
For $1 \le p < \infty$, an $\emph{$L_p$-kernel}$ is a function $\kappa: \Rplus \to \Rplus$ which is continuous on $(0,\infty)$ and satisfies $\int_0^T \kappa(t)^p dt < \infty$ for all $T > 0$.
\end{defn}

\begin{rem}
\begin{enumerate}[(a)]
\item If $\eta_t(T)$ factors as $\eta_t(T) = Z_t \kappa(T-t)$ into a non-negative continuous stochastic process $Z$ and a deterministic function $\kappa$, then Assumption~\ref{ass:sigma} is equivalent to $\kappa$ being a decreasing $L_2$-kernel in the sense of Definition~\ref{def:kernel}.
\item By \eqref{eq:SDE_S} $S$ is a non-negative local martingale and therefore a supermartingale, such that $\E{S_t} = \E{e^{X_t}} \le S_0$. This implies by Jensen's inequality that the moments $\E{S_t^u}$ are finite for all $u \in [0,1]$; a fact that will be used subsequently.
\end{enumerate}
\end{rem}

\subsection{A characterization of affine forward variance models}

\begin{defn}\label{def:affine}
We say that a forward variance model $(X,\xi)$ has an \emph{affine cumulant generating function} determined by $g(t,u)$, if its conditional cumulant generating function is of the form 
\begin{equation}\label{eq:affine}
\log \Econd{e^{u(X_T - X_t)}}{\cF_t} = \int_t^T g(T-s,u) \xi_t(s)ds,
\end{equation}
for all $u \in [0,1]$, $0 \le t \le T$ and where $g(.,u)$ is $\Rminus$-valued and continuous on $[0,T]$ for all $T > 0$ and $u \in [0,1]$.
\end{defn}
\begin{rem}
Alternatively, we could consider \eqref{eq:affine} with imaginary parameter $u = iz$ for $z \in \RR$, i.e. an \emph{affine log-characteristic function} as in \cite{euch2018characteristic}. This is of particular interest for applications of Fourier pricing to the model $(X,\xi)$. To show results on distributional convergence, which will be the subject of Section~\ref{sec:HF}, using the cumulant generating function is sufficient and it will turn out that restricting to real parameters simplifies many of the mathematical arguments.
\end{rem}

Convolution integrals, as in the exponent of \eqref{eq:affine}, will appear frequently in the following calculations and so it is natural to introduce the convolution operation $(f \star g)(t) := \int_0^t f(t-s)g(s)ds$. For functions with multiple arguments or subscripts, we use the convention that convolution acts on the first argument, excluding subscripts. Other arguments or subscripts are passed on to the result. With this convention \eqref{eq:affine} can be written succinctly as
\begin{equation}\label{eq:affine_compact}
\Econd{e^{u(X_T - X_t)}}{\cF_t} = \exp \Big( (g \star \xi)_t (T,u)\Big).
\end{equation}
The following result gives a characterization of all forward variance models with affine CGF. Its proof is given in Section~\ref{sec:proof} with some parts relegated to Appendix~\ref{app:proof}.

\begin{thm}\label{thm:main}
Under Assumption~\ref{ass:sigma} a forward variance model $(X,\xi)$ has an affine CGF if and only if 
\begin{equation}\label{eq:beta_sigma}
\eta_t(T) = \sqrt{V_t} \kappa(T-t), 
\end{equation}
for a deterministic, decreasing $L_2$-kernel $\kappa$. Moreover, $g(.,u): \Rplus \to \Rminus$ in \eqref{eq:affine} is the unique global continuous solution of the convolution Riccati equation
\begin{equation}\label{eq:Riccati}
g(t,u) = R_V\Big(u, \int_0^t \kappa(t-s) g(s,u) ds\Big) = R_V\Big(u,(\kappa \star g) (t,u)\Big),\quad t \ge 0
\end{equation}
where
\begin{equation}\label{eq:RV}
R_V(u,w) = \frac{1}{2}(u^2 - u)  +  \rho u w  + \frac{1}{2} w^2.
\end{equation}
\end{thm}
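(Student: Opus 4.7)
The plan is to use the standard ``affine martingale'' trick in both directions. Set
\[
\Phi_t := u X_t + \int_t^T g(T-s,u)\xi_t(s)\,ds, \qquad M_t := e^{\Phi_t}.
\]
Since $\Phi_T = uX_T$, the affine CGF identity \eqref{eq:affine} is equivalent to $M$ being a true martingale on $[0,T]$. The heart of both directions is a single It\^o computation of $dM_t$ whose drift condition yields (or, going backwards, encodes) the convolution Riccati equation \eqref{eq:Riccati}.

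For the sufficient direction, I assume $\eta_t(T) = \sqrt{V_t}\kappa(T-t)$ and let $g(\cdot,u)$ be a $\Rminus$-valued continuous solution of \eqref{eq:Riccati}. Using $d\xi_t(s) = \eta_t(s)\,dW_t$, $\xi_t(t)=V_t$, Leibniz differentiation of the lower integral limit, and a stochastic Fubini step, I obtain
\[
d\Phi_t = \Big[-\tfrac{u}{2}V_t - g(T-t,u)V_t\Big]dt + \big[u\rho\sqrt{V_t} + H_t\big]dW_t + u\sqrt{V_t(1-\rho^2)}\,dW_t^\bot,
\]
where $H_t := \int_t^T g(T-s,u)\eta_t(s)ds = \sqrt{V_t}(\kappa\star g)(T-t,u)$ in the factored case. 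It\^o on $e^{\Phi_t}$ then produces a drift proportional to $V_t$ times
\[
\tfrac{1}{2}(u^2-u) - g(T-t,u) + u\rho(\kappa\star g)(T-t,u) + \tfrac{1}{2}(\kappa\star g)^2(T-t,u),
\]
which vanishes by \eqref{eq:Riccati}. Hence $M$ is a local martingale; since $g\le 0$, we have $M_t \le e^{uX_t}=S_t^u$, and combined with the supermartingale bound $\E{S_t^u}\le S_0^u<\infty$ recorded in Section~2, a standard domination argument upgrades $M$ to a true martingale, yielding \eqref{eq:affine} after taking conditional expectations.

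For the necessary direction, affinity makes $M$ a martingale by construction. Repeating the It\^o computation \emph{without} assuming a factorization of $\eta_t$, the drift of $M$ must vanish $dt\otimes d\PP$-a.e., giving the pointwise quadratic
\[
V_t\!\left[\tfrac{1}{2}(u^2-u) - g(T-t,u)\right] + u\rho\sqrt{V_t}\,H_t + \tfrac{1}{2}H_t^2 = 0.
\]
Solving for $H_t$ and fixing the sign via the boundary condition $H_t\to 0$ as $t\uparrow T$ yields $H_t = \sqrt{V_t}\,\phi(T-t,u)$ for a deterministic function $\phi(\cdot,u)$. Fixing $\omega$ and varying $T \ge t$, this becomes the deterministic convolution identity
\[
\big(g(\cdot,u)\star f_t\big)(\tau) = \sqrt{V_t}\,\phi(\tau,u), \qquad f_t(\tau):=\eta_t(t+\tau),\ \tau\ge 0.
\]
For $u\in(0,1)$, evaluating the quadratic at $t=T$ forces $g(0,u)=\tfrac12(u^2-u)<0$, so $g(\cdot,u)$ is nonzero on every neighbourhood of $0$ and Titchmarsh's convolution theorem deconvolves uniquely. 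Since $\phi$ and $g$ depend on $t$ only through $T-t$, this forces $f_t(\tau)/\sqrt{V_t} = \kappa(\tau)$ for a deterministic $\kappa$ independent of $t$, i.e.\ $\eta_t(T)=\sqrt{V_t}\kappa(T-t)$. Substituting back and dividing by $V_t$ recovers \eqref{eq:Riccati}.

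The main obstacle I expect is the global existence and uniqueness of a $\Rminus$-valued continuous solution to \eqref{eq:Riccati} on all of $\Rplus$: because the equation is quadratic in $g$, Picard iteration gives only local existence, and extending to $\Rplus$ requires an a~priori bound ruling out explosion, together with a comparison argument to keep $g\le 0$. The natural strategy is to bootstrap from the bound $M_t\le S_t^u$, which is essentially the identity we want to prove, so some care is needed to avoid circularity; this interplay is presumably what is relegated to the appendix. Secondary technical points are justifying stochastic Fubini in both It\^o computations, handling the set $\{V_t=0\}$ when dividing by $V_t$ in the necessary direction, and verifying that the $\kappa$ obtained by deconvolution is in fact a decreasing $L_2$-kernel, so that Assumption~\ref{ass:sigma} is indeed implied.
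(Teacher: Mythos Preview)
Your overall architecture matches the paper: the same It\^o decomposition of $M_t=\exp(uX_t+G_t)$ (the paper's Lemma~\ref{lem:G}), the same drift condition, and in the sufficient direction the same upgrade from local to true martingale via $M_t\le S_t^u$ (the paper phrases this as the $L^p$-criterion with $p=1/u>1$). Your Titchmarsh deconvolution in the necessary direction is a clean alternative to the paper's explicit use of the resolvent of the first kind of $g(\cdot,u)$. One correction regarding the obstacle you flag: the paper does \emph{not} resolve global Riccati existence by bootstrapping from the stochastic bound --- that would indeed be circular. It proves it independently in Appendix~\ref{app:volterra} by a purely analytic comparison argument for Volterra equations, using that $w\mapsto R_V(u,w)$ is convex with $R_V(u,0)<0$ and a unique negative root $w_*(u)$; this traps the auxiliary solution $f=(\kappa\star g)$ in $(w_*(u),0]$, and non-positivity of $g=R_V(u,f)$ follows. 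The stochastic bound enters only afterwards, to upgrade $M$ to a true martingale.

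There is a genuine gap in your necessary direction when $\rho>0$. The quadratic in $h:=H_t/\sqrt{V_t}$ has the two roots $q_\pm(\tau,u)=-\rho u\pm\sqrt{u^2(\rho^2-1)+u+2g(\tau,u)}$, and for $\rho>0$ \emph{both} can be negative. Your boundary condition at $\tau=0$ correctly selects $q_+$, but only up to the first collision time $T_*(u)=\inf\{\tau>0:q_+(\tau,u)=q_-(\tau,u)\}$; once the discriminant touches zero, continuity alone cannot decide which branch is followed thereafter, so the claim that $H_t=\sqrt{V_t}\,\phi(T-t,u)$ with $\phi$ deterministic holds \emph{globally} is unjustified. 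The paper (Appendix~\ref{app:proof}) closes this by observing that the drift identity is itself a convolution Riccati equation for $g$ with the a~priori $(t,\omega)$-dependent kernel $k_t(\tau)=\eta_t(t+\tau)/\sqrt{V_t}$, invoking the analytic Riccati estimates to bound $g$ from below, and deducing $T_*(u)\to\infty$ as $u\uparrow 1$. Since $\eta$ does not depend on $u$, the factorization established on $[0,T_*(u))$ then extends to all of $\Rplus$. For $\rho\le 0$ this issue disappears because $q_+\ge 0$ and the negative root is unique, which is why the paper splits the two cases.
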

\begin{rem}\label{rem:outer_Riccati}
Alternatively, $g(t,u)$ can be written as
\[g(t,u) = R_V(u,f(t,u)),\]
where $f(t,u)$ is the unique global continuous solution of the non-linear Volterra equation
\begin{equation}\label{eq:Riccati_outer}
f(t,u) = \int_0^t \kappa(t-s) R_V(u,f(s,u)) ds.
\end{equation}
See Appendix~\ref{app:volterra} for further discussion of non-linear Volterra equations and for the equivalence of equations \eqref{eq:Riccati} and \eqref{eq:Riccati_outer}.
\end{rem}

We introduce the useful notion of a $\gamma$-resolvent kernel:
\begin{lem}\label{lem:resolvent}
Let $1 \le p < \infty$ and let $\kappa$ be an $L_p$-kernel. Then for any $\gamma \ge 0$, there exists a unique $L_p$-kernel $r$, such that
\begin{equation}\label{eq:resolvent}
r - \kappa = \gamma (\kappa \star r).
\end{equation}
If $\log \kappa$ is convex, then the assertion also holds for $\gamma < 0$. We call $r$ the \emph{$\gamma$-resolvent} of $\kappa$.
\end{lem}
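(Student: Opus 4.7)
The equation $r - \kappa = \gamma(\kappa \star r)$ is a linear Volterra equation of the second kind, $(I - \gamma T_\kappa)r = \kappa$ with $T_\kappa : f \mapsto \kappa \star f$. The plan is to invert $I - \gamma T_\kappa$ on each $L^p[0,T]$ via a Neumann series, then separately verify continuity, $L^p$-integrability, and non-negativity; the last is trivial for $\gamma \ge 0$ but is the main obstacle for $\gamma < 0$.

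Any $L^p$-kernel lies in $L^1_{\mathrm{loc}}(\Rplus)$ by H\"older, so Young's inequality makes $T_\kappa$ a bounded operator on $L^p[0,T]$ with $\|T_\kappa\| \le \|\kappa\|_{L^1[0,T]}$. The key technical input is that $T_\kappa$ is \emph{quasinilpotent} on every $L^p[0,T]$, i.e.\ $\|T_\kappa^n\|^{1/n} \to 0$. Since $T_\kappa^n f = \kappa^{\star n} \star f$ (with $\kappa^{\star n}$ denoting the $n$-fold self-convolution), this reduces to $\|\kappa^{\star n}\|_{L^1[0,T]}^{1/n} \to 0$, which I would obtain by splitting $\kappa = \kappa_b + \kappa_s$ on $[0,T]$ with $\kappa_b$ bounded and $\|\kappa_s\|_{L^1[0,T]} < \varepsilon$, invoking the elementary factorial bound $\kappa_b^{\star k}(t) \le \|\kappa_b\|_\infty^k t^{k-1}/(k-1)!$, and expanding $\kappa^{\star n}$ binomially. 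Quasinilpotency then gives $(I - \gamma T_\kappa)^{-1} = \sum_{n \ge 0} \gamma^n T_\kappa^n$ on $L^p[0,T]$ for every $\gamma \in \RR$, so $r = \sum_{n \ge 0} \gamma^n \kappa^{\star(n+1)}$ converges absolutely in $L^p[0,T]$, is the unique $L^p_{\mathrm{loc}}$-solution (uniqueness: the homogeneous relation $\Delta = \gamma T_\kappa \Delta$ forces $\Delta = \gamma^n T_\kappa^n \Delta \to 0$), and patches together to give a unique $L^p$-kernel candidate on $\Rplus$. Continuity of $r$ on $(0,\infty)$ follows by iterating the equation once to obtain $r = \kappa + \gamma \kappa^{\star 2} + \gamma^2 \kappa^{\star 2} \star r$ and observing that $\kappa^{\star 2}$ and $\kappa^{\star 2} \star r$ are continuous on $(0,\infty)$, using local boundedness of $\kappa$ away from the origin, continuity of translation in $L^1_{\mathrm{loc}}$, and dominated convergence.

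It remains to verify non-negativity. For $\gamma \ge 0$ this is immediate: every term in the Neumann series is a convolution of non-negative functions, hence non-negative, so $r \ge 0$ termwise. For $\gamma < 0$ the series has alternating signs and non-negativity is \emph{not} automatic --- e.g.\ $\kappa(t) = t$, which is log-concave, gives resolvent $\sin(\sqrt{|\gamma|}\,t)/\sqrt{|\gamma|}$, which oscillates and becomes negative --- so the log-convexity hypothesis must be used in a nontrivial way. This is the main obstacle. My plan is to appeal to the classical theorem of Gripenberg (1980) that log-convex kernels are \emph{completely positive}: their $\gamma$-resolvents are non-negative for every $\gamma \in \RR$. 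The idea behind that result is that log-convexity is preserved under convolution, so every iterate $\kappa^{\star n}$ is log-convex; this allows one to approximate $\kappa$ by completely monotone functions (mixtures of exponentials), for which the resolvent is directly computable via Laplace transform and seen to be non-negative, and then pass to the limit while preserving the sign.
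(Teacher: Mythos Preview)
Your proposal is essentially correct and considerably more detailed than the paper's own proof, which simply observes that $-\gamma r$ is the resolvent of the second kind of $-\gamma\kappa$ and then cites five results from Gripenberg--Londen--Staffans (\emph{Volterra Integral and Functional Equations}): Thm.~2.3.1 for existence/uniqueness, Thm.~2.3.5 for local $L^p$-integrability, Prop.~9.5.7 for continuity, Prop.~9.8.1 for positivity when $\gamma\ge 0$, and Prop.~9.8.8 for positivity under log-convexity when $\gamma<0$. Your Neumann-series construction via quasinilpotency, the bootstrapped continuity argument, and the termwise positivity for $\gamma\ge 0$ are exactly the content of the first four of these references, so you are effectively unpacking the citations rather than taking a different route.

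One concrete error: in your sketch of the idea behind Gripenberg's complete-positivity theorem you assert that log-convexity is preserved under convolution. It is not. Take $\kappa\equiv 1$, which is trivially log-convex; then $\kappa^{\star 2}(t)=t$, whose logarithm is concave. So the mechanism you describe (``every iterate $\kappa^{\star n}$ is log-convex'') fails already at $n=2$. The actual proof in Gripenberg (1980) and in Prop.~9.8.8 of the book proceeds differently --- roughly, one shows that a log-convex kernel can be approximated by finite sums of decaying exponentials (each of which is completely positive, as can be checked directly via Laplace transforms), and complete positivity is stable under such limits. Since you ultimately cite the result rather than prove it, this does not invalidate your argument, but the heuristic you offer is misleading and should be replaced or removed.
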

\begin{proof}
If $\gamma = 0$ it clear that $r = \kappa$, and the Lemma becomes trivial. In all other cases $-\gamma r$ is the so-called `resolvent of the second kind' (see \cite{gripenberg1990volterra}) of $-\gamma \kappa$, and the properties of $r$ follow directly from Thm.~2.3.1 (existence and uniqueness), Thm.~2.3.5 (local $L_p$-integrability), Prop.~9.5.7 (continuity), Prop.~9.8.1 (positivity for $\gamma \ge 0)$ and Prop.~9.8.8 (positivity under log-convexity for $\gamma < 0$); all in \cite{gripenberg1990volterra}.
\end{proof}
Upon taking Laplace transforms $\wh{\kappa}, \wh{r}$ of $\kappa, r$, the resolvent equation \eqref{eq:resolvent} becomes
\begin{equation}\label{eq:resolvent_laplace}
\wh{r} - \wh{\kappa}  = \gamma (\wh{\kappa} \cdot  \wh{r}),
\end{equation}
which can be useful for determining $r$ explicitly from a given $\kappa$.

\begin{rem}[Relation to affine Volterra processes]\label{rem:volterra}
Note that the instantaneous variance process $V_t = \xi_t(t)$ of an AFV model can be written as
\begin{equation}\label{eq:affine_volterra}
V_t = \xi_0(t) + \int_0^t \kappa(t-s)\sqrt{V_s}dW_s,
\end{equation}
which shows that $V$ is an affine Volterra process in the sense of \cite{jaber2017affine}.  We emphasize, that the representation \eqref{eq:affine_volterra} is not unique. Indeed, let $\lambda > 0$ and let $\varphi$ be the $\lambda$-resolvent of $\kappa$. Then, using e.g. \cite[Lem.~2.5]{jaber2017affine}, it follows that
\begin{equation}\label{eq:affine_volterra_drift}
V_t = \tilde{\xi}_0(t)  - \lambda \int_0^t \varphi(t-s) V_s ds  + \int_0^t \varphi(t-s) \sqrt{V_s} dW_s,
\end{equation}
with $\tilde{\xi}_0 = \xi_0 + \lambda (\varphi \star \xi_0)$, and a mean-reverting drift-term appears. 
Conversely, if a process of the form \eqref{eq:affine_volterra_drift} is given, and if $\varphi$ has a $(-\lambda)$-resolvent $\kappa$, then the forward variance is of the form
\[\xi_t(T) = \Econd{V_T}{\cF_t} = \xi_0(T) + \int_0^t \kappa(T-s) \sqrt{V_s}dB_s,\]
with $\xi_0 = \tilde{\xi_0} - \lambda (\kappa \star \tilde{\xi}_0)$, cf. \cite[Lem.~2.5]{jaber2017affine}. 
\end{rem}

\subsection{Two examples: Heston and rough Heston models}

\begin{example}[The Heston model]\label{ex:heston}
The Heston model \cite{heston1993closed} is given by
\begin{subequations}\label{eq:heston}
\begin{align}
dS_t &= S_t \sqrt{V_t} \left(\rho dW_t + \sqrt{1 - \rho^2} dW_t^\bot\right)\label{eq:heston1}\\
dV_t &= -\lambda (V_t - \theta) dt + \zeta \sqrt{V_t} dW_t.\label{eq:heston2}
\end{align}
\end{subequations}
A simple calculation shows that
\[\xi_t(T) = \Econd{V_T}{\cF_t} = \theta \left(1 - e^{-\lambda(T-t)}\right) + e^{-\lambda (T-t)} V_t.\]
Hence,
\[d \xi_t(T) = \zeta e^{-\lambda(T-t)} \sqrt{V_t} dW_t\]
and it follows that the Heston model can be written as an affine forward variance model with kernel
\[\kappa(x) = \zeta e^{-\lambda x}\]
and initial forward variance
\[\xi_0(T) = \theta \left(1 - e^{-\lambda T}\right) + e^{-\lambda  T} V_0 = V_0 + (\theta - V_0) \lambda \int_0^T \kappa(s) ds.\]
Note that $\kappa$ is the $(-\lambda/\zeta)$-resolvent of the constant kernel $\zeta$, in accordance with Remark~\ref{rem:volterra}.
To obtain the Riccati ODEs for the Heston model in the usual form (see e.g. \cite{keller-ressel2011moment}), let $\psi(.,u)$ be a $C^1$-function such that 
\[g(t,u) = \pd{}{t}\psi(t,u) + \lambda \psi(t,u), \quad \text{and} \quad \psi(0,u) = 0.\] 
By partial integration we obtain
\[(\kappa \star g)(t,u) = \zeta \int_0^t e^{-\lambda (t-s)} g(s,u) = \zeta \psi(t,u).\]
Inserting into the convolution Riccati equation \eqref{eq:Riccati} yields
\[\pd{}{t}\psi(t,u) = \frac{1}{2}(u^2 - u) + (\zeta \rho u - \lambda) \psi(t,u) + \frac{\zeta^2}{2} \psi(t,u)^2, \qquad \psi(0,u) = 0,\]
in accordance with \cite{keller-ressel2011moment}. Furthermore, it is straightforward to show that 
\[\int_t^T g(t-s,u) \xi_t(s) ds = \phi(T-t,u) + V_t \psi(T-t,u),\]
with $\phi(t,u) = \lambda \theta \int_0^t \psi(s,u) ds$. \hfill $\diamond$
\end{example}

\begin{example}[The rough Heston model]\label{ex:rough}
In the rough Heston model, introduced in \cite{euch2018characteristic}, \eqref{eq:heston2} is replaced by 
\begin{equation}
V_t = V_0 + \frac{1}{\Gamma(\alpha)} \int_0^t (t-s)^{\alpha - 1} \lambda (\theta - V_s) ds + \frac{\zeta}{\Gamma(\alpha)}  \int_0^t (t-s)^{\alpha -1} \sqrt{V_s} dW_s
\end{equation}
where $\alpha \in (1/2,1)$ is related to the `roughness' of the paths of $V$. Note that this is an affine Volterra process \eqref{eq:affine_volterra_drift} with power-law kernel $\phi_\text{pow}(t) = \zeta t^{\alpha-1}/\Gamma(\alpha)$. In \cite{euch2018perfect} it is shown that the forward variance in the rough Heston model satisfies
\[d \xi_t(T) = \kappa(T-s) \sqrt{V_t} dW_t,\]
with the kernel
\[\kappa(x) = \zeta x^{\alpha - 1} E_{\alpha,\alpha}(-\lambda x^\alpha)\]
and where $E_{\alpha,\beta}(x)$ denotes the generalized Mittag-Leffler function (cf. \cite{erdelyi1981higher}, \cite[Sec.~1.2]{podlubny1998fractional}).\footnote{Alternatively, one can check that the Laplace transforms $\wh{\phi}_\text{pow}(z) = \zeta z^{-\alpha}$ and $\wh{\kappa}(z) = \zeta/(z^\alpha + \lambda)$ satisfy the relation \eqref{eq:resolvent_laplace} with $\gamma = -\lambda/\zeta$.} Thus, the rough Heston model is an affine forward variance model in the sense of Theorem~\ref{thm:main}. The initial forward variance is given by (cf. \cite[Prop.~3.1]{euch2018perfect})
\[\xi_0(T) = V_0 + (\theta - V_0) \lambda \int_0^T \kappa(s) ds.\]
To obtain the \emph{fractional} Riccati equation (cf. \cite[Eq.~(24)]{euch2018characteristic}) for the rough Heston model set
\[\psi(t,u) = \frac{1}{\zeta} (\kappa \star g)(t,u) = \frac{1}{\Gamma(\alpha)}\int_0^t (t-s)^{\alpha - 1} E_{\alpha,\alpha}(-\lambda(t-s)) g(s,u) ds.\]
By \cite[Lem.~A.2]{euch2018characteristic} $\psi(t,u)$ satisfies
\[D^\alpha \psi(t,u) + \lambda \psi(t,u) = g(s,u)\]
where $D^\alpha$ denotes the Riemann-Liouville fractional derivative of order $\alpha$.  
Inserting into the convolution Riccati equation \eqref{eq:Riccati} yields
\[D^\alpha \psi(t,u) = \frac{1}{2}(u^2 - u) + (\zeta \rho u - \lambda) \psi(t,u) + \frac{\zeta^2}{2} \psi(t,u)^2, \qquad \psi(t;0) = 0,\]
in accordance with \cite[Eq~(24)]{euch2018characteristic}. Denote by $I^\alpha f = \frac{1}{\Gamma(\alpha)} \int_0^\infty (t-s)^{\alpha - 1}f(s) ds$ the Riemann-Liouville fractional integral of order $\alpha$ and write $\bm{1}$ for the function of constant value one. The exponent in \eqref{eq:affine} can be transformed as follows:
\begin{align*}
\int_0^. g(.-s,u) \xi_0(s) ds &= V_0 (g \star \bm{1}) + (\theta - V_0) \frac{\lambda}{\zeta} (g \star \kappa \star \bm{1}) = \\
&= V_0 \left((g - \lambda\, \psi) \star \bm{1}\right) + \theta \lambda \, (\psi \star \bm{1}) = \\
&= V_0 (D^\alpha \psi) \star \bm{1} + \theta \lambda \int_0^. \psi(s,u) = \\
&= V_0 I^{1-\alpha} \psi + \theta \lambda \int_0^. \psi(s,u),
\end{align*}
which is the same as \cite[Eq~(23)]{euch2018characteristic}. \hfill $\diamond$
\end{example}

\subsection{Proving the characterization result}\label{sec:proof}
To prepare for the proof of Theorem~\ref{thm:main}, we introduce the following notation: Given a family $g(.,u)_{u \in [0,1]}$ of continuous functions from $\Rplus$ to $\RR$, we set
\begin{align}
G_t &= (g \star \xi)_t(T,u)  = \int_t^T g(T-s,u) \xi_t(s) ds,\label{eq:Gdef}\\
M_t &= \exp \left(u X_t + G_t \right) \label{eq:Mdef}.
\end{align}
If $(X,\xi)$ has an affine CGF determined by $g(t,u)$ then it follows from \eqref{eq:affine} that $M$ is a martingale. Conversely, if $M$ is a martingale, then \eqref{eq:affine} follows by taking conditional expectations. Hence, the affine property of $(X,\xi)$ can be characterized in terms of the martingale property of $M$. In order to apply It\^o's formula to $M$ we represent $G$ as an It\^o process. The calculation is analogous to the drift computation in the Heath-Jarrow-Morton-model (cf. \cite[Ch.~6]{filipovic2009term}) and uses the stochastic Fubini theorem to interchange stochastic integral and Lebesgue integral.
\begin{lem}\label{lem:G}
Let $G$ and $M$ be given as in \eqref{eq:Gdef}, \eqref{eq:Mdef} and let Assumption~\ref{ass:sigma} on the forward variance model $(X,\xi)$ hold. Then $G$ can be written in It\^o process form as
\[G_t = \int_0^T g(T-s,u) \xi_0(s) ds - \int_0^t g(T-s,u) V_s ds + \int_0^t h_s(T,u) dW_s,\]
where
\begin{equation}\label{eq:v}
h_t(T,u) = (g \star \eta)_t(T,u) = \int_t^T g(T-r,u) \eta_t(r) dr.
\end{equation}
Moreover, $M$ is an It\^o process, which can be decomposed as
\begin{align}\label{eq:M_decomp}
\frac{dM_t}{M_t} &= \text{loc.{}mg.{}}  + \\
 &+ \left\{ \frac{1}{2}(u^2 - u) V_t - g(T-t,u) V_t + u \rho \sqrt{V_t} h_t(T,u) + \frac{1}{2} h_t(T,u)^2 \right\} dt. \notag
\end{align}
\end{lem}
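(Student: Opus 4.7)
The plan is to realise $G_t$ as an It\^o process by inserting the SDE for $\xi_t(s)$ and invoking the stochastic Fubini theorem, and then to apply It\^o's formula to $M_t = \exp(uX_t + G_t)$.

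For the representation of $G$, I would substitute $\xi_t(s) = \xi_0(s) + \int_0^t \eta_r(s)\,dW_r$ (valid for $s \in [t,T]$) into the definition of $G_t$. This separates $G_t$ into a deterministic integral involving $\xi_0$ and an iterated Lebesgue--stochastic integral. The deterministic piece is handled immediately via $\int_t^T = \int_0^T - \int_0^t$. For the stochastic piece, Assumption~\ref{ass:sigma}(b) is precisely the hypothesis required by the classical stochastic Fubini theorem, which gives
\[\int_t^T g(T-s,u) \int_0^t \eta_r(s)\,dW_r\,ds = \int_0^t \int_t^T g(T-s,u)\eta_r(s)\,ds\,dW_r.\]
Because $h_r(T,u)$ needs lower limit $r$ rather than $t$, I split $\int_t^T = \int_r^T - \int_r^t$ (using $r \le t$). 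The $\int_r^T$ part produces $\int_0^t h_r(T,u)\,dW_r$ directly. A second Fubini interchange on the triangle $\{0 \le r \le s \le t\}$ rewrites the remainder as $\int_0^t g(T-s,u)\int_0^s \eta_r(s)\,dW_r\,ds$, and the identity $\int_0^s \eta_r(s)\,dW_r = \xi_s(s) - \xi_0(s) = V_s - \xi_0(s)$ converts it into Lebesgue integrals in $V_s$ and $\xi_0(s)$. Collecting the four pieces yields the announced decomposition.

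With the semimartingale decomposition of $G$ in hand, I apply It\^o's formula to $M_t = \exp(uX_t+G_t)$. The first-order drifts are $-\tfrac12 u V_t$ from $uX_t$ and the boundary term $-g(T-t,u)V_t$ from $G_t$, while the second-order correction $\tfrac12 d\langle uX+G\rangle_t$ contributes $\tfrac12 u^2 V_t$ from $d\langle uX\rangle_t$, $u\rho\sqrt{V_t}\,h_t(T,u)$ from $d\langle uX,G\rangle_t$ (only the $W$-components of $X$ and $G$ correlate), and $\tfrac12 h_t(T,u)^2$ from $d\langle G\rangle_t$. Summing these reproduces the drift displayed in \eqref{eq:M_decomp}, with the remaining stochastic integrals absorbed into the local martingale part.

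The main obstacle is the double use of the stochastic Fubini theorem and the careful bookkeeping of limits of integration needed to uncover exactly the kernel $h_r(T,u)$; Assumption~\ref{ass:sigma}(b) is tailored to meet the integrability hypothesis, so once Fubini is justified the remaining algebra and the It\^o computation are routine.
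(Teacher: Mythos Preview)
Your proposal is correct and follows essentially the same route as the paper: substitute the martingale representation of $\xi_t(s)$, apply the stochastic Fubini theorem twice with the appropriate splitting of integration limits to isolate $h_r(T,u)$ and the $V_s$-term, and then run It\^o's formula on $M_t=\exp(uX_t+G_t)$. The only point the paper makes more explicit is that the Fubini hypothesis is $\int_t^T(\int_0^t g(T-s,u)^2\eta_r(s)^2\,dr)^{1/2}ds<\infty$, which follows from Assumption~\ref{ass:sigma}(b) \emph{together with} the boundedness of $g(\cdot,u)$ on $[0,T]$ coming from its continuity; you should mention this factor of $g$ rather than invoking Assumption~\ref{ass:sigma}(b) alone.
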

\begin{proof}Fix $T > 0$ and $u \in [0,1]$. Following \cite[p.~94]{filipovic2009term} closely, we compute
\begin{align*}
G_t &= \int_t^T g(T-s,u) \xi_t(s) ds = \\ 
& = \int_t^Tg(T-s,u) \xi_0(s) ds + \int_t^T \int_0^t g(T-s,u) \eta_r(s) dW_r\, ds \stackrel{stoch.{}Fub.{}}{=} \\
 &=   \int_t^Tg(T-s,u) \xi_0(s) ds + \int_0^t \int_t^T g(T-s,u) \eta_r(s)  ds\,dW_r = \\
 &= \int_0^Tg(T-s,u) \xi_0(s) ds + \int_0^t \int_r^T g(T-s,u) \eta_r(s)  ds\, dW_r - \\
 & \phantom{=} - \int_0^t g(T-s,u) \xi_0(s) ds - \int_0^t \int_r^t g(T-s,u) \eta_r(s)  ds\, dW_r  \stackrel{stoch.{}Fub.{}}{=} \\
&=  \int_0^Tg(T-s,u) \xi_0(s) ds + \int_0^t \int_r^T g(T-s,u) \eta_r(s)  ds \,dW_r - \\
 & \phantom{=} - \int_0^t g(T-s,u) \underbrace{\left(\xi_0(s) ds + \int_0^s \eta_r(s)  dW_r \right)}_{=V_s} ds.
\end{align*}
To justify the application of the stochastic Fubini theorem in the form of \cite[Thm.~2.2]{veraar2012stochastic}, we have to check whether 
\[I_t := \int_t^T \left(\int_0^t g(T-s,u)^2 \eta_r(s)^2 dr \right)^{1/2} ds\]
is finite for all $t \in [0,T]$. Since $g(.,u)$ is continuous, there is a finite constant $g_*(u) := \sup_{t \in [0,T]} |g(t,u)|$. Using Assumption~\ref{ass:sigma} we find that 
\[I_t \le g_*(u)  \int_0^T \left(\int_0^T \eta_r(s)^2 dr \right)^{1/2} ds < \infty\]
for all $t \in [0,T]$ and the application of the stochastic Fubini theorem was justified.

To show \eqref{eq:M_decomp}, we apply It\^o's formula to $M_t = \exp \left(u X_t + G_t \right)$ and obtain, using \eqref{eq:SDE_S} and \eqref{eq:SDE_xi} that
\begin{align}
\frac{dM_t}{M_t} &= u\, dX_t + dG_t + \frac{u^2}{2} d[X,X]_t + u\, d[X,G]_t + \frac{1}{2} d[G,G]_t = \label{eq:M}\\
 &= \text{loc.{}mg.{}}  + \notag\\
 &+ \left\{ \frac{1}{2}(u^2 - u) V_t - g(T-t,u) V_t + u \rho \sqrt{V_t} h_t(T,u) + \frac{1}{2} h_t(T,u)^2 \right\}  dt,\notag
 \end{align}
as claimed.
\end{proof}

\begin{lem}\label{lem:Riccati}
Let $\kappa$ be a decreasing $L_1$-kernel. Then for any $u \in [0,1]$, the convolution Riccati equation \eqref{eq:Riccati} has a unique global continuous $\Rminus$-valued solution $g(.,u)$. Moreover, $g(t,u) = R_V(u,f(t,u))$, where $f(.,u)$ is the unique global continuous solution of \eqref{eq:Riccati_outer}. 
\end{lem}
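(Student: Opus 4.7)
The plan is to prove existence and uniqueness for the outer equation \eqref{eq:Riccati_outer} first, then transfer the result to \eqref{eq:Riccati} via the substitution $g(t,u) := R_V(u,f(t,u))$. The equivalence of the two forms, recalled in Remark~\ref{rem:outer_Riccati} with details deferred to Appendix~\ref{app:volterra}, is routine: if $g$ solves \eqref{eq:Riccati} then $f := \kappa \star g$ automatically solves \eqref{eq:Riccati_outer}; conversely, given a solution $f$ of \eqref{eq:Riccati_outer}, setting $g := R_V(u,f)$ yields $\kappa \star g = \kappa \star R_V(u,f) = f$, and hence $g = R_V(u,\kappa \star g)$. Uniqueness of $g$ will then follow from that of $f$.

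For \eqref{eq:Riccati_outer}, I would first establish local existence and uniqueness of a continuous solution $f(\cdot,u)$ on some $[0,T_0]$ with $f(0,u) = 0$ by a standard Banach fixed-point argument: since $R_V(u,\cdot)$ is a polynomial, hence Lipschitz on bounded sets, and $\kappa \in L^1_{\mathrm{loc}}(\Rplus)$, the operator $\Phi(f) := \kappa \star R_V(u,f)$ is a contraction on a sufficiently small closed ball in $C([0,T_0])$ for $T_0$ small enough. A standard continuation argument then produces a maximal interval $[0,T_{\max}(u))$ of existence, subject to the usual blow-up alternative $T_{\max}(u) < \infty \Longrightarrow \limsup_{t \uparrow T_{\max}(u)} |f(t,u)| = \infty$.

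The main obstacle is then the a priori bound $f(t,u) \in [w_-(u),0]$ for all $t \in [0,T_{\max}(u))$, where $w_\pm(u) = -\rho u \pm \sqrt{u(1-u(1-\rho^2))}$ are the real roots of $w \mapsto R_V(u,w) = \tfrac{1}{2}(w - w_-(u))(w - w_+(u))$; for $u \in [0,1]$ one verifies $w_-(u) \le 0 \le w_+(u)$ and $R_V(u,w) \le 0$ precisely on $[w_-(u),w_+(u)]$. The upper bound $f(t,u) \le 0$ follows by a continuity/bootstrap argument, since whenever $f$ remains in $[w_-(u),w_+(u)]$ the integrand $R_V(u,f(s))$ in \eqref{eq:Riccati_outer} is nonpositive and $\kappa \ge 0$. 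The lower bound $f(t,u) \ge w_-(u)$ is the delicate point; I would establish it by the same bootstrap, exploiting the factorization of $R_V$ and the monotonicity of $\kappa$ together with the fact that $R_V(u,w_-(u)) = 0$, so that the ``drift'' in the Volterra equation vanishes at the lower boundary and prevents $f$ from reaching $w_-(u)$ in finite time. Once this two-sided bound is in place, $|f(\cdot,u)|$ is uniformly bounded on $[0,T_{\max}(u))$, which rules out blow-up and forces $T_{\max}(u) = \infty$. Setting $g(t,u) := R_V(u,f(t,u))$ then yields the required unique global continuous $\Rminus$-valued solution of \eqref{eq:Riccati}.
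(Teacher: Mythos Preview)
Your overall strategy matches the paper's: both work through the outer equation \eqref{eq:Riccati_outer}, establish an a~priori bound trapping $f$ in an interval where $R_V(u,\cdot)\le 0$, and then read off the global solution of \eqref{eq:Riccati} via $g=R_V(u,f)$. The paper simply packages the analytic work into Corollary~\ref{cor:Volterra} (which rests on Theorem~\ref{thm:Volterra}), verifying that $H_u(w):=R_V(u,w)$ satisfies the convexity and root conditions of Assumption~\ref{ass:convex}.

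Where your sketch is thin is exactly the step you call ``delicate'': the lower bound $f(t,u)\ge w_-(u)$. The intuition ``the drift vanishes at the lower boundary, so $f$ cannot reach $w_-(u)$ in finite time'' is an ODE argument; for a Volterra equation with memory it does not go through as stated, and a naive bootstrap fails because $f(t)$ depends on the whole past $\{f(s):s\le t\}$ through a kernel that is being evaluated at \emph{different} arguments as $t$ varies. The paper's Theorem~\ref{thm:Volterra} shows precisely how the monotonicity of $\kappa$ is used: one introduces the auxiliary family $v(t,T)=\int_0^t\kappa(T-s)R_V(u,f(s))\,ds$, notes that $f(t)\ge v(t,T)$ for $t\le T$ because $\kappa$ is decreasing, and that $\partial_t v(t,T)=\kappa(T-t)R_V(u,f(t))\ge\kappa(T-t)R_V(u,v(t,T))$ on the relevant range. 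This reduces the problem to an honest ODE comparison, whose solution $r(t,T)=Q_1^{-1}\bigl(\int_0^t\kappa(T-s)\,ds,0\bigr)$ stays strictly above $w_-(u)$ for all finite $t$ (the divergence of $Q_1$ near $w_-(u)$ is where the root of $R_V$ finally enters). Your proposal names the right ingredients but does not indicate this mechanism; without it, the lower-bound step is a genuine gap rather than routine.
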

\begin{proof}
Fix $u \in (0,1)$ and set $H_u(w) = R_V(u,w)$, with $R_V$ given by \eqref{eq:RV}. It is easy to check that $H_u$ satisfies all conditions of Corollary~\ref{cor:Volterra}, i.e., that 
\begin{itemize}
\item $H_u(w)$ is a finite, strictly convex function on $(-\infty,0]$ and satisfies $H_u(0) < 0$;
\item $H_u(w)$ has a single root $H_u(w_*(u)) = 0$ in $(-\infty,0]$.
\end{itemize}
Thus, we conclude from Corollary~\ref{cor:Volterra} the existence of a unique global continuous solution $g(t,u)$ to \eqref{eq:Riccati} for all $u \in (0,1)$. Moreover, $g(t,u) \le 0$ for all $(t,u) \in \Rplus \times (0,1)$ by estimate \eqref{eq:g_estim}. Adding the boundary cases $u \in \{0,1\}$ is trivial: Observe that $g(t,u) \equiv 0$ is a constant global solution of \eqref{eq:Riccati} for $u \in \set{0,1}$, which must be unique by \cite[Thm.~13.1.2]{gripenberg1990volterra}. Also the representation as $g(t,u) = R_V(u,f(t,u))$ follows directly from Corollary~\ref{cor:Volterra}.
\end{proof}

We are now prepared to prove the first part of Theorem~\ref{thm:main}.
\begin{proof}[Theorem~\ref{thm:main}, `if' part.]
Let $\eta_t(T) = \sqrt{V_t} \kappa(T-t)$ and fix some $(T,u) \in (0,\infty) \times (0,1)$. By Lemma~\ref{lem:Riccati}, the convolution Riccati equation \eqref{eq:Riccati} associated to this kernel $\kappa$ has a unique global continuous $\Rminus$-valued solution $g(.,u)$. Using this solution $g(.,u)$ we define the processes $G$ and $M$ as in \eqref{eq:Gdef} and \eqref{eq:Mdef}. Applying Lemma~\ref{lem:G}, we see that equation \eqref{eq:v} for $h_t(T,u)$ can be simplified due to the factorization of $\eta_t(T)$ to 
\[h_t(T,u) = (g \star \eta)_t(T,u) = \sqrt{V_t} \int_0^{T-t} g(T-t-s,u) \kappa(s) ds = \sqrt{V_t} \cdot (g \star \kappa)(T-t).\] 
Inserting into \eqref{eq:M_decomp} we obtain
\begin{equation}
\frac{dM_t}{M_t} = \text{loc.{}mg.{}}  + \Big\{R_V\big(u, (g \star \kappa)(T-t)\big) - g(T-t,u) \Big\} V_t dt.
\end{equation}
Since $g(.,u)$ solves the convolution Riccati equation \eqref{eq:Riccati} the $dt$-term vanishes and we have shown that $M$ is a local martingale. It remains to show that $M$ is a true martingale. To this end, observe that 
\begin{equation}\label{eq:Lp_bound}
|M_t|^{1/u} = \exp \left(\frac{1}{u} (u X_t + G_t)\right)\le \exp(X_t) = S_t
\end{equation}
for all $t \in [0,T]$.
But $S$ is a supermartingale, such that
\[\E{|M_t|^{1/u}} \le \E{S_t} \le S_0.\]
Setting $p = 1/u > 1$ this is the $L_p$-criterion for uniform integrability of $M$. We conclude that $M$ is a true martingale, and hence that 
\begin{equation}\label{eq:cond_M}
\Econd{e^{u X_T}}{\cF_t} = \Econd{M_T}{\cF_t} = M_t = \exp \left(u X_t + \int_t^T g(T-s,u) \xi_t(s)ds\right), 
\end{equation}
showing \eqref{eq:affine}  for all $u \in (0,1)$. Adding the boundary cases $u \in \{0,1\}$ is trivial. Since $g(t,u) \equiv 0$ in these cases, \eqref{eq:cond_M} follows immediately. 
\end{proof}

For the reverse implication of Theorem~\ref{thm:main} we distinguish the cases $\rho \le 0$ and $\rho > 0$. The proof follows the same structure in both cases, but $\rho > 0$ needs some additional technical arguments, which are relegated to Appendix~\ref{app:proof}.
\begin{proof}[Theorem~\ref{thm:main}, `only if' part in the case $\rho \ge 0$.]
By assumption, the forward variance model $(X,\xi)$ has an affine CGF in the sense of Definition~\ref{def:affine}. Using the associated function $g(.,u)$, we define the processes $G$ and $M$ as in \eqref{eq:Gdef} and \eqref{eq:Mdef}. Observe that due to \eqref{eq:affine}, $M$ has to be a martingale. In addition, note that Assumption~\ref{ass:sigma}, together with $g(.,u) \le 0$ yields that
\[h_t(T,u) = (g \star \eta)_t(T,u) \le 0\]
for all $t, T \ge 0, u \in [0,1]$. Applying Lemma~\ref{lem:G} and using the fact that $M$ is a martingale, we see that the $dt$-term in \eqref{eq:M_decomp} has to vanish, i.e. that
\begin{equation}\label{eq:quadratic}
\frac{1}{2}(u^2 - u) V_t - g(T-t,u) V_t + u \rho \sqrt{V_t} h_t(T,u) + \frac{1}{2} h_t(T,u)^2 = 0,
\end{equation}
up to a $(dt \times d\PP)$-nullset. This is a quadratic equation in the variable $h_t(T,u)$, and due to the assumption $\rho \le 0$ it possesses a unique negative solution
%\[h_t(T,u) = \sqrt{V_t} \big(\underbrace{-\rho u - \sqrt{u^2(\rho^2 - 1) + u + 2g(T-t,u)}}_{:= q(T-t,u)}\big).\]
\[h_t(T,u) = \sqrt{V_t} q_-(T-t,u) := \sqrt{V_t} \big(-\rho u - \sqrt{u^2(\rho^2 - 1) + u + 2g(T-t,u)}\big).\]
Inserting the definition of $h_t(T,u)$ from \eqref{eq:v} and setting $\tau = T-t$ we obtain
\begin{equation}\label{eq:Volterra_first}
\int_0^\tau g(\tau-s,u) \eta_t(t+s) ds = \sqrt{V_t }q_-(\tau,u), \qquad \forall \tau \ge 0.
\end{equation}
This is a Volterra integral equation of the first kind (cf. \cite[Sec.~5.5]{gripenberg1990volterra}) for the unknown function $\eta_t(t+.)$. From \eqref{eq:quadratic} it can easily be seen that $g(0,u) = \frac{1}{2}(u^2 - u) < 0$. Therefore, by \cite[Ex.~5.25]{gripenberg1990volterra}, see also \cite[Thm.~1]{gripenberg1980volterra}, there exists a (locally finite Borel) measure $\pi(d\tau,u)$ --  the \emph{resolvent of the first kind}\footnote{Not to be confused with the $\gamma$-resolvent of Lemma~\ref{lem:resolvent}.} of $g(.,u)$ -- such that $\int_0^\tau g(\tau-s,u) \pi(ds,u) = 1$ for all $\tau > 0$. Convolving \eqref{eq:Volterra_first} with $\pi(d\tau,.)$, the unique solution of \eqref{eq:Volterra_first} can be expressed as, cf. \cite[Thm.~5.3]{gripenberg1990volterra},
\begin{equation}
\eta_t(t + \tau) = \sqrt{V_t} \cdot \pd{}{\tau} \left( \int_0^\tau q_-(\tau-s,u) \pi(ds,u) \right).
\end{equation}
Denoting the last factor by $\kappa(\tau)$ and taking into account Assumption~\ref{ass:sigma} the desired decomposition \eqref{eq:beta_sigma} follows. 
\end{proof}

\section{Affine forward order flow intensity models}\label{sec:AAFIM}
We now introduce a class of models for market order flow, which are structurally similar to forward variance models. These models consist of a log-price $X$ and a \emph{forward intensity process} $\xi_t(T)$, which models the expectation (at time $t$) of the future intensity of order flow (at time $T$). The forward intensity $\xi_t(T)$ has a role similar to the forward variance, and we call the resulting model an {\em affine forward order flow intensity} (AFI) model.\footnote{The strong empirical correlation between order volume (as a proxy for intensity) and return variance is well-documented in the literature (see e.g. \cite{gallant1992stock}). Therefore the parallels between AFV and AFI models should not come as a complete surprise.} The AFI model is driven purely by the arrival of market orders, which are represented by two pure-jump semimartingales $J^+_t, J^-_t$ of finite activity and with common intensity $\lambda_{t-}$. As in \eqref{eq:fw_var}, we assume that $\xi_.(T)$ and $\lambda$ are connected by 
\[\xi_t(T) := \Econd{\lambda_{T-}}{\cF_t}.\]
The driving processes $J^\pm$ jump only upwards and represent the arrival of buy and sell orders respectively. Their jump height distributions are given by two probability measures $\zeta_\pm(dx)$ on $\Rplus$ for buy and for sell orders. We assume that $\int_0^\infty e^x \zeta_\pm(dx) < \infty$; in particular, also the first moments
\[m_\pm := \int_0^\infty x \, \zeta_\pm(dx)\]
exist. In addition, we assume that the order flow processes are self-exciting, in the sense that each arriving order positively impacts the intensity process.  This impact can be asymmetric, i.e. the degree of self-excitement may be different for buy- and sell-orders. Together this leads to the specification
of the AFI model as
\begin{subequations}\label{eq:AFI}
\begin{align}  
dX_t &= -\lambda_{t-} m_X dt + dJ_t^+ - dJ_t^-,\label{eq:X_AFI}\\
d\xi_t(T) &=  \kappa(T-t)  \left(\gamma_+ d\wt{J}^+_t  + \gamma_- d\wt{J}^-_t\right).
\end{align}
\end{subequations}
where $\kappa$ is an $L_1$-kernel in the sense of Definition~\ref{def:kernel}, $\gamma_\pm$ are positive constants, $m_X$ is determined by the martingale condition on $S=e^X$ and $\wt{J}^{\pm}_t$ denote the \emph{compensated} order flow processes, i.e. $\wt{J}^{\pm}_t := J^\pm_t - m_\pm \int_0^t \lambda_{s-}ds$. Setting 
\begin{equation}\label{eq:j_compensated}
J_t^X = J_t^+ - J_t^-, \qquad 
J_t^\lambda = \gamma_+ J_t^+ + \gamma_- J_t^-
\end{equation}
and denoting by $\wt{J}^\lambda$ the compensated counterpart of $J^\lambda$, we can rewrite \eqref{eq:AFI} as
\begin{align*}  
dX_t &= -\lambda_{t-} m_X dt + dJ_t^X,\\
d\xi_t(T) &=  \kappa(T-t) d\wt{J}^\lambda_t.
\end{align*}
We proceed to discuss the jump processes and the compensators of their random jump measures in more detail. The random jumps of $J^\pm$ are compensated by
\[d\nu^\pm_t(dx) = \lambda_{t-} \zeta_\pm(dx) dt,\]
where $x$ represents jump size. While $J^+$ and $J^-$ are independent, given $\lambda$, it is important to note that $J_t^X$ and $J_t^\lambda$ are not. Instead, they move by simultaneous jumps. Thus, the predictable compensator of the jump measure of $(J^X, J^\lambda)$ is given by
\begin{align*}
d\nu_t^{(X,\lambda)}(dx,dy) &= \lambda_{t-} \chi(dx,dy) dt, \qquad \text{where}\\
\chi(dx,dy) &= \Big(\Ind{x \ge 0}\Ind{y = \gamma_+ x}\zeta_+(dx) +\Ind{x \le 0}\Ind{y = -\gamma_- x}\zeta_-(-dx)\Big).
\end{align*}
Note that the measure of joint jump heights $\chi(dx,dy)$ is concentrated on the line segments $y = \gamma_+ x,\;(x \ge 0)$ and $y = -\gamma_- x,\;(x \le 0)$ due to the simultaneity of jumps. In addition, we define 
\begin{equation}\label{eq:psi_def}
\psi_\pm(u) = \int_0^\infty \left(e^{ux}  - 1\right) \zeta_\pm(dx)
\end{equation}
and calculate
\[\int_{\RR \times \Rplus} \left(e^{ux + wy}  - 1 \right) \chi(dx,dy) = \psi_+\big(u + w \gamma_+\big) + \psi_-\big(-u + w \gamma_-\big).\]
Applying It\^o's formula for jump processes to $e^X$ it is easy to see that the martingale condition implies that
\[m_X = \psi_+(1) + \psi_-(-1).\]
The following theorem is the analogue of Theorem~\ref{thm:main} and shows the structural similarity between affine forward variance models and AFI models.

 \begin{thm}\label{thm:AFI}
The AFI model \eqref{eq:AFI} has an affine CGF in the sense of Definition~\ref{def:affine}. Moreover, $g(.,u): \Rplus \to \Rminus$ in \eqref{eq:affine} is the unique global solution of the generalized convolution Riccati equation
\begin{equation}\label{eq:Riccati_jump}
g(t,u) = R_\lambda\Big(u, \int_0^t \kappa(t-s) g(s,u) ds\Big) = R_\lambda\Big(u,(\kappa \star g) (t,u)\Big),
\end{equation}
where
\begin{equation}
R_\lambda(u,w) = \psi_+\big(u + w \gamma_+\big) + \psi_-\big(-u + w \gamma_-\big) -u m_X - w\big(\gamma_+ m_+ + \gamma_- m_-\big),
\end{equation}
with $\psi_\pm$ as in \eqref{eq:psi_def}.
\end{thm}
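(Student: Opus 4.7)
My proof would follow the template of Theorem~\ref{thm:main}, replacing the Brownian driver of $\xi_t(T)$ by the compensated jump martingale $\int_0^t \kappa(T-r)\,d\wt J^\lambda_r$. First, existence and uniqueness of $g(\cdot,u)$ solving \eqref{eq:Riccati_jump} reduces to verifying that $H_u(w):=R_\lambda(u,w)$ satisfies the hypotheses of Corollary~\ref{cor:Volterra} used in Lemma~\ref{lem:Riccati}. Convexity of $\psi_\pm$ with $\psi_\pm(0)=0$, together with $m_X=\psi_+(1)+\psi_-(-1)$, gives $H_u(0)=\psi_+(u)+\psi_-(-u)-um_X\le 0$ for $u\in[0,1]$; strict convexity of $H_u$ on $(-\infty,0]$ is inherited from $\psi_\pm$; and $H_u(w)\to +\infty$ as $w\to -\infty$ since $\psi_\pm(z)\to -1$ while $-w(\gamma_+m_++\gamma_-m_-)$ dominates. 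This yields a unique $\Rminus$-valued continuous global solution $g(\cdot,u)$.

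Next, I would define $G_t$ and $M_t$ by \eqref{eq:Gdef}, \eqref{eq:Mdef} and derive the jump analogue of Lemma~\ref{lem:G}. Using stochastic Fubini for integrals against the compensated jump measure of $J^\lambda$, the same bookkeeping as in the AFV proof yields
\[
G_t = \int_0^T g(T-s,u)\xi_0(s)\,ds - \int_0^t g(T-s,u)\lambda_{s-}\,ds + \int_0^t (g\star\kappa)(T-r,u)\,d\wt J^\lambda_r.
\]
Writing $w(\tau):=(g\star\kappa)(\tau,u)$, the semimartingale $uX_t+G_t$ then has absolutely continuous drift $\lambda_{t-}\bigl[-um_X - g(T-t,u) - (\gamma_+m_++\gamma_-m_-)w(T-t)\bigr]\,dt$ and joint jumps $ux + w(T-t)y$ with $(x,y)\sim\chi$ at intensity $\lambda_{t-}\,dt$. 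Applying the It\^o formula for semimartingales with jumps to $M_t=\exp(uX_t+G_t)$, the drift of $dM_t/M_{t-}$ equals $\lambda_{t-}\,dt$ times
\[
-um_X - g(T-t,u) - (\gamma_+m_++\gamma_-m_-)w(T-t) + \int\bigl(e^{ux+w(T-t)y}-1\bigr)\chi(dx,dy).
\]
Invoking the identity $\int(e^{ux+wy}-1)\chi(dx,dy)=\psi_+(u+w\gamma_+)+\psi_-(-u+w\gamma_-)$ recorded just before the theorem, this bracket collapses to $R_\lambda(u,w(T-t))-g(T-t,u)$ and vanishes by \eqref{eq:Riccati_jump}. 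Hence $M$ is a local martingale.

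To promote $M$ to a true martingale and conclude \eqref{eq:affine}, I would reuse the AFV argument: since $g\le 0$ and $\xi_t(s)\ge 0$ we have $G_t\le 0$, so $|M_t|^{1/u}\le e^{X_t}=S_t$, and $S$ is a non-negative local martingale (by the choice $m_X=\psi_+(1)+\psi_-(-1)$), hence a supermartingale with $\E{|M_t|^{1/u}}\le S_0$. The $L^p$-criterion with $p=1/u>1$ then delivers uniform integrability of $M$ on $[0,T]$ for $u\in(0,1)$; the cases $u\in\{0,1\}$ are trivial since $g\equiv 0$. The step I expect to be most delicate is the jump analogue of Lemma~\ref{lem:G}: the stochastic Fubini interchange must be justified for integrals against a random jump measure whose compensator involves the process $\lambda$, and the exponential compensator integrals appearing in the It\^o drift must be finite. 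On the relevant range $u\in[0,1]$ and $w\le 0$ one has $(u+w\gamma_+)x\le x$ and $(-u+w\gamma_-)x\le 0$ on the supports of $\zeta_\pm$, so the hypothesis $\int e^x\zeta_\pm(dx)<\infty$ controls all the integrals and legitimizes each step.
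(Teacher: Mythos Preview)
Your proposal is correct and follows essentially the same route as the paper's proof: represent $G_t$ via stochastic Fubini with the compensated jump martingale $\wt J^\lambda$, apply the jump It\^o formula to $M_t=\exp(uX_t+G_t)$ to obtain a drift of the form $\lambda_{t-}\{R_\lambda(u,(\kappa\star g)(T-t,u))-g(T-t,u)\}$, verify the hypotheses of Corollary~\ref{cor:Volterra} for $H_u(w)=R_\lambda(u,w)$, and upgrade the local martingale to a true martingale via the $L^p$-bound $|M_t|^{1/u}\le S_t$. The only cosmetic differences are that the paper checks $H_u(0)<0$ by the pointwise inequality $e^{ux}<ue^x+1$ rather than invoking convexity of $\psi_\pm$ (your argument gives only $\le 0$ unless you also use \emph{strict} convexity for $u\in(0,1)$, which you do mention), and the paper presents the It\^o computation before establishing the Riccati solution rather than after.
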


\begin{proof}Essentially, we proceed as in the `if'-part of the proof of Theorem~\ref{thm:main}. Let $G$ be defined as in \eqref{eq:Gdef} and set $M_t = \exp(u X_t + G_t)$. Applying the same argument as in the proof of Lemma~\ref{lem:G}, but replacing Brownian motion by the pure-jump-martingale $\wt{J}^\lambda$ we obtain
\[G_t = \int_0^T g(T-s,u) \xi_0(s) ds - \int_0^t g(T-s,u) \lambda_{s-} ds + \int_0^t h_s(T,u) d\wt{J}^\lambda_s,\]
where
\[h_t(T,u) = \lambda_{t-} \int_t^T \kappa(r-t) g(T- r,u) dr.\]
Applying the It\^o-formula with jumps to $M$ we obtain
\[M_t = M_0 + \int_0^t M_{s-} \left(u dX_t + dG_t \right)  + \sum_{0 \le s \le t}  M_{s-} \left(e^{ u \Delta X_s + \Delta G_s} - 1 - u \Delta X_s - \Delta G_s \right)\]
and compensating the jumps yields
\begin{align}\label{eq:martingale_jump}
\frac{dM_t}{M_t} &= \text{loc. mg.} - \lambda_{t-} u m_X dt + \big(\gamma_+ m_+ + \gamma_- m_-\big) h_t(T,u) dt  - g(T-t,u) \lambda_{t-} dt + 
\notag\\ &+ \lambda_{t-} \int_{\RR \times \Rplus} \left(e^{u x + y h_t(T,u)} - 1\right) \chi(dx,dy) dt = \notag\\
&= \text{loc. mg.} + \lambda_{t-} \Big\{ R_\lambda\big(u,(g \star \kappa)(T-t,u)\big) - g(T-t,u)\Big\} dt. 
\end{align}
where `loc. mg.' denotes a local martingale part that we need not compute explicitly. We see that the $dt$-terms vanish, if
\[g(\tau,u) = R_\lambda\Big(u, \int_0^\tau \kappa(\tau-s) g(s,u) ds \Big),\]
i.e. if the generalized convolution Riccati equation \eqref{eq:Riccati_jump} has a solution for $0 \le \tau \le T-t$. \\
To show that there exists a unique global continuous solution of \eqref{eq:Riccati_jump}, we proceed as in the proof of Lemma~\ref{lem:Riccati}, i.e., we set $H_u(w) = R_\lambda(u,w), u \in (0,1)$ and show that $H_u$ satisfies the conditions of Corollary~\ref{cor:Volterra}. In particular, for all $u \in (0,1)$, 
\begin{itemize}
\item $H_u(w)$ is a finite, strictly convex function on $(-\infty,0]$ and satisfies $H_u(0) < 0$;
\item $H_u(w)$ has a single root $H_u(w_*(u)) = 0$ in $(-\infty,0]$.
\end{itemize}
Indeed, note that strict convexity is inherited from $\psi_\pm$, cf. \eqref{eq:psi_def}. In addition, convexity of the exponential function implies, for $u \in (0,1)$, that
\[e^{ux} = e^{u \cdot x + (1-u) \cdot 0} \le u e^x + (1 - u) e^0 < u e^x + 1,\]
and hence that
\[H_u(0) = \int_0^\infty \left(e^{ux} - 1- ue^x\right)  \zeta_+(dx) + \int_0^\infty \left(e^{-ux} - 1- ue^{-x} \right) \zeta_-(dx) < 0.\]
Finally, the existence of the root $w_*(u)$ follows from the fact that
\[\lim_{w \to -\infty} \int_0^\infty \left( e^{(\pm u+\gamma_\pm w)x} - 1 - \gamma_\pm w x \right) \zeta_\pm(dx)  = + \infty,\]
which implies that $\lim_{w \to -\infty} H_u(w) = +\infty$. In summary, $H_u$ satisfies all conditions of Cor.~\ref{cor:Volterra} and we conclude the existence of a unique global solution $g(t,u)$ of the Riccati equation for all $u \in (0,1)$. Moreover, $g(t,u) \le 0$ for all $(t,u) \in \Rplus \times (0,1)$ by estimate \eqref{eq:g_estim}. We can add the boundary cases $u \in \{0,1\}$, observing that they yield the constant global solution $g(t,u) \equiv 0$, which must be unique by \cite[Thm.~13.1.2]{gripenberg1990volterra}. From \eqref{eq:martingale_jump} we conclude that $M_t = \exp(u X_t + G_t)$ is a local martingale.  
By the same arguments as in \eqref{eq:Lp_bound} and \eqref{eq:cond_M} it follows that $M$ is a true martingale, and hence that $(X,\xi)$ has an affine CGF.
\end{proof}

\begin{example}[The bivariate Hawkes process of \cite{euch2018characteristic}]\label{ex:hawkes}
Consider \eqref{eq:X_AFI}, driven by a bivariate Hawkes process $(J^+, J^-)$ with unit jump size (i.e., $\zeta_\pm(dx) = \delta_1(dx))$, common kernel $\varphi$, and common intensity $\lambda_t$, given by
\[\lambda_t = \mu + \int_0^t \varphi(t-s) \left(\gamma_+ dJ_s^+ + \gamma_- dJ_s^-\right),\]
as in \cite[Sec.~2]{euch2018characteristic}. Set $\widehat{\gamma}:= \gamma_+ + \gamma_-$ and let $\kappa$ be the $\widehat{\gamma}$-resolvent of $\varphi$ in the sense of \eqref{eq:resolvent}. In terms of $\kappa$, the Hawkes intensity $\lambda$ has the martingale representation (cf. \cite[Eq.~(45)]{bacry2015hawkes}) 
\[\lambda_t = \mu + \mu \widehat{\gamma} \int_0^t \kappa(t-u)  du + \int_0^t \kappa(t-u) d\tilde{J}^\lambda_u,\]
with the last integral now driven by a \emph{compensated} jump processes. Taking conditional expectations and using the martingale property of $\tilde{J}^\lambda$ yields
\[\Econd{\lambda_T}{\cF_t} = \mu +  \mu \widehat{\gamma} \int_0^T \kappa(T-u)du + \int_0^t \kappa(T-u) d\tilde{J}^\lambda_u,
\]
and hence
\[d\xi_t(T) = d\Econd{\lambda_T}{\cF_t} = \kappa(T-t)d\tilde{J}^\lambda_t,\]
which shows that the model can be cast as AFI model with kernel $\kappa$. For concrete specifications of $\varphi$, we can take Laplace transforms and use the relation \eqref{eq:resolvent_laplace} to determine the corresponding $\kappa$. Consider, for example
\begin{equation}\label{eq:laplace_exp}
\varphi(x) = e^{-(\lambda + \widehat{\gamma})x} \qquad \text{with Laplace tf.} \qquad \wh{\varphi}(z) = \frac{1}{z + \lambda + \widehat{\gamma}}.
\end{equation}
For this $\varphi$ we obtain from \eqref{eq:resolvent_laplace} the $\widehat{\gamma}$-resolvent
\[\kappa(x) = e^{-\lambda x} \qquad \text{with Laplace tf.} \qquad \wh{\kappa}(z) = \frac{1}{z + \lambda}, \]
i.e., the kernel of the Heston model in forward variance form; see Example~\ref{ex:heston}.
Furthermore, the Hawkes kernel
\begin{equation}\label{eq:kernel_ML}
\varphi(x) = x^{\alpha-1} E_{\alpha,\alpha}(-(\lambda  + \widehat{\gamma}) x^\alpha)
\end{equation}
has Laplace transform $\wh{\varphi}(z) = 1 / ( z^\alpha + \lambda + \widehat{\gamma})$ (cf. \cite[Eq.~(7.5)]{haubold2011mittag}). Thus its $\widehat{\gamma}$-resolvent is
\begin{equation}\label{eq:resolvent_ML}
\kappa(x) =  x^{\alpha-1} E_{\alpha,\alpha}(-\lambda x^\alpha),
\end{equation}
the kernel of the \emph{rough} Heston model in forward variance form; see Example~\ref{ex:rough}.%\flushright$\diamond$
\end{example}

\section{High-frequency limit of the AFI model}\label{sec:HF}

We proceed to show that the AFV model is the high-frequency limit of the AFI model. This limit is closely related to the limits of `nearly unstable' Hawkes processes considered in \cite{jaisson2015limit, jaisson2016rough, euch2018characteristic}, see Example~\ref{ex:unstable} below.
\subsection{A first convergence result}
Our starting point is the AFI model \eqref{eq:X_AFI}. We assume that buy/sell order size distributions $\zeta_\pm$ are normalized in the sense that
\begin{equation}\label{eq:normalize}
\int_0^\infty x^2 \zeta_+(dx) + \int_0^\infty x^2 \zeta_-(dx) = 1,
\end{equation}
and we denote by 
\begin{equation}\label{eq:pdef}
p := \int_0^\infty x^2 \zeta_+(dx) \in [0,1]
\end{equation}
the variance of buy orders relative to sell orders. We introduce a small parameter $\epsilon$ and rescale \eqref{eq:AFI} as
\begin{subequations}\label{eq:AFI_scale}
\begin{align}  
dX^\epsilon_t &= -\lambda^\epsilon_{t-} m_X dt + dJ_t^{\epsilon,+} - dJ_t^{\epsilon,-},\\
d\xi^\epsilon_t(T) &=  \kappa^\epsilon(T-t)  \left(\gamma_+ d\wt{J}^{\epsilon,+}_t  + \gamma_- d\wt{J}^{\epsilon,-}_t\right),
\end{align}
\end{subequations}
where $J^{\epsilon,\pm}$ are pure jump semimartingales, independent given their common intensity $\lambda^\epsilon_t = \xi^\epsilon_t(t)$ 
and with jump height distribution
\[\zeta^\epsilon_\pm(dx) = \zeta_\pm(dx / \sqrt{\epsilon}).\]
Moreover, the kernels are scaled as 
\[\kappa^\epsilon(x) = \frac{1}{\epsilon} \kappa(x).\]
Thus, as $\epsilon \downarrow 0$, the frequency of jumps increases proportional to $1/\epsilon$, while the size of jumps shrinks proportional to $\sqrt{\epsilon}$. 
The initial conditions of \eqref{eq:AFI_scale} are given by $X^\epsilon_0 = X_0$ and $\xi^\epsilon_0(T) = \frac{1}{\epsilon} \xi_0(T)$. Under the given scaling, the quantities from \eqref{eq:psi_def} and below transform as
\begin{align*}
\psi_\pm^\epsilon(u) &= \psi_\pm(\sqrt{\epsilon} u)\\
m_X^\epsilon &= \psi_+(\sqrt{\epsilon}) + \psi_-(-\sqrt{\epsilon})\\
m^\epsilon_\pm &= \sqrt{\epsilon} m_\pm
\end{align*}
and we write
\[R^\epsilon(u,w) = \psi_+^\epsilon\big(u + w \gamma_+\big) + \psi_-^\epsilon\big(-u + w \gamma_-\big) -u m^\epsilon_X - w \big(\gamma_+ m_+^\epsilon + \gamma_- m_-^\epsilon\big).\]

\begin{lem}\label{lem:scale}
Given $\gamma_\pm > 0$ and the jump height distributions $\zeta_\pm(dx)$, define $c > 0$ and $\rho \in [-1,1]$ by
\begin{equation}\label{eq:rhodef}
c = \sqrt{p \gamma_+^2 + (1-p) \gamma_-^2} \qquad \rho = \frac{1}{c} \Big(p\gamma_+ - (1-p) \gamma_-\Big).
\end{equation}
Then
\[\lim_{\epsilon \to 0} \frac{1}{\epsilon} R^\epsilon(u,w) = \frac{1}{2}(u^2 - u)  +  c \rho u w  + \frac{c^2}{2} w^2 = R_V(u,cw)\]
with $R_V(u,w)$ as in \eqref{eq:RV}. Moreover, also the partial derivatives with respect to $u$ and $w$ converge, i.e. 
\begin{equation*}
\begin{split}
\lim_{\epsilon \to 0} \frac{1}{\epsilon} \pd{R^\epsilon}{u}(u,w) &= \pd{R_V}{u}(u,cw) = u - \frac{1}{2}  +  c \rho w\\
\lim_{\epsilon \to 0} \frac{1}{\epsilon} \pd{R^\epsilon}{w}(u,w) &= \pd{R_V}{w}(u,cw) = c w  +  \rho u.
\end{split}
\end{equation*}
\end{lem}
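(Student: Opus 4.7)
The approach is to perform a second-order Taylor expansion of $\psi_\pm$ around zero and track orders in $\sqrt{\epsilon}$. Under the standing exponential-moment hypothesis $\int_0^\infty e^x\zeta_\pm(dx)<\infty$, the function $\psi_\pm$ is smooth on a neighbourhood of the origin with $\psi_\pm'(0)=m_\pm$ and $\psi_\pm''(0)=\int_0^\infty x^2\zeta_\pm(dx)$; by the normalization \eqref{eq:normalize}--\eqref{eq:pdef} the latter equals $p$ for $+$ and $1-p$ for $-$. Standard Taylor-with-integral-remainder, combined with the fact that $\int_0^\infty x^3 e^{\lambda x}\zeta_\pm(dx)<\infty$ for every $\lambda<1$ (which follows from the exponential-moment assumption), gives
\[
\psi_\pm(\sqrt{\epsilon}\,z)=\sqrt{\epsilon}\,z\,m_\pm+\tfrac{1}{2}\epsilon\, z^2 q_\pm+o(\epsilon)
\]
and, one order higher,
\[
\psi_\pm'(\sqrt{\epsilon}\,z)=m_\pm+\sqrt{\epsilon}\,z\, q_\pm+o(\sqrt{\epsilon}),
\]
both locally uniformly in $z$, where $q_+=p$ and $q_-=1-p$.

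The main calculation is then direct substitution. Writing $R^\epsilon(u,w)$ with $z=u+w\gamma_+$ in the first $\psi$-term and $z=-u+w\gamma_-$ in the second, and using $m_X^\epsilon=\sqrt{\epsilon}(m_+-m_-)+\tfrac{1}{2}\epsilon+o(\epsilon)$ together with $m_\pm^\epsilon=\sqrt{\epsilon}\,m_\pm$, the $O(\sqrt{\epsilon})$ contributions cancel identically:
\[
(u+w\gamma_+)m_+ + (-u+w\gamma_-)m_- - u(m_+-m_-) - w(\gamma_+m_++\gamma_-m_-)=0.
\]
This is exactly the compensation that was built into the definition of $R_\lambda$, and it is the reason the particular rescaling $\kappa^\epsilon=\epsilon^{-1}\kappa$, $\zeta^\epsilon_\pm(dx)=\zeta_\pm(dx/\sqrt{\epsilon})$ produces a well-defined non-trivial limit. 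The surviving $O(\epsilon)$ contributions assemble into
\[
\tfrac{1}{2}p(u+w\gamma_+)^2+\tfrac{1}{2}(1-p)(-u+w\gamma_-)^2-\tfrac{u}{2}=\tfrac{1}{2}(u^2-u)+uw\bigl(p\gamma_+-(1-p)\gamma_-\bigr)+\tfrac{1}{2}w^2\bigl(p\gamma_+^2+(1-p)\gamma_-^2\bigr),
\]
which by the definitions of $c$ and $\rho$ is precisely $R_V(u,cw)$.

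For the partial derivatives I would differentiate $R^\epsilon$ directly in $u$ and in $w$ (permissible since $\psi_\pm$ is smooth) to obtain expressions built from $(\psi_\pm^\epsilon)'(y)=\sqrt{\epsilon}\,\psi_\pm'(\sqrt{\epsilon}y)$ together with the constants $m_X^\epsilon$ and $\gamma_\pm m_\pm^\epsilon$. Substituting the second expansion above, exactly the same cancellation at order $\sqrt{\epsilon}$ occurs, and the surviving $O(\epsilon)$ terms reproduce the two stated formulae after applying the definitions of $c$ and $\rho$ once more. The one genuinely non-routine point is the uniform control of the Taylor remainders on compact $(u,w)$-sets: one bounds $|e^{\sqrt{\epsilon}zx}-1-\sqrt{\epsilon}zx-\tfrac{1}{2}\epsilon z^2 x^2|$ by $C\epsilon^{3/2}|z|^3x^3 e^{\sqrt{\epsilon}|z|x}$ and, for $(u,w)$ in a compact set and $\epsilon$ small enough, dominates by $C' x^3 e^{x/2}\zeta_\pm(dx)$, which is integrable by the exponential-moment hypothesis. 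Once this uniformity is in hand, the rest is pure algebra.
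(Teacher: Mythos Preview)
Your proof is correct and follows essentially the same approach as the paper: both perform a second-order Taylor expansion in $\sqrt{\epsilon}$ and control the remainder via the exponential-moment assumption and dominated convergence. The only cosmetic difference is that the paper expands the integrand $b_\pm^\epsilon(x;u,w)$ under the $\zeta_\pm$-integral and then integrates, whereas you expand the already-integrated function $\psi_\pm$; for the partial derivatives the paper invokes convexity of $R^\epsilon$ (monotone convergence of difference quotients) to differentiate under the integral, while you use a direct dominated-convergence bound on the remainder, but the substance is identical.
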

\begin{proof}
We can write
\begin{equation}\label{eq:R_epsilon_rep}
R^\epsilon(u,w) = \int_0^\infty b_+^\epsilon(x; u,w) \zeta_+(dx) + \int_0^\infty b_-^\epsilon(x; u,w) \zeta_-(dx)
\end{equation}
where
\begin{align*}
b_\pm^\epsilon(x,u,w) &= - u (e^{\pm \sqrt{\epsilon}x} -1) - w \gamma_\pm \sqrt{\epsilon} x  + \\ 
&+ \exp\Big((\pm u + w\gamma_\pm)\sqrt{\epsilon}x\Big) - 1.
\end{align*}
Expanding in powers of $\sqrt{\epsilon}x$ yields
\[b_\pm^\epsilon(x,u,w) = \epsilon x^2 \left(\frac{1}{2}(u^2 - u) \pm uw \gamma_\pm + \frac{w^2}{2} (\gamma_\pm)^2\right) + \mathcal{O}(\epsilon^{3/2}x^3).\]
Hence, using \eqref{eq:normalize} and \eqref{eq:pdef}, it follows that 
\begin{align*}
\lim_{\epsilon \to 0} \frac{1}{\epsilon}  R^\epsilon(u,w) &= \frac{1}{2} (u^2 - u) + uw (p \gamma_+  - (1-p)  \gamma_-) + \frac{w^2}{2} \left(p\gamma_+^2 + (1-p) \gamma_-^2\right) = \\ &= \frac{1}{2}(u^2 - u)  +  c \rho u w  + \frac{c^2}{2} w^2 = R_V(u,cw),
\end{align*}
where exchanging limit and integral is justified by dominated convergence and the integrability condition $\int_0^\infty e^x \zeta_\pm(dx) < \infty$.

To show the convergence of partial derivatives, we take partial derivatives in \eqref{eq:R_epsilon_rep} to obtain
\[\pd{R^\epsilon}{u}(u,w) = \int_0^\infty \pd{b_+^\epsilon}{u}(x; u,w) \zeta_+(dx) + \int_0^\infty \pd{b_-^\epsilon}{u}(x; u,w) \zeta_-(dx).\]
Since $R^\epsilon$ is convex, its difference quotients converge monotonically, and monotone convergence can be used to exchange derivative and integral. Expanding $\pd{b_\pm^\epsilon}{u}(x; u,w)$ in powers of $\sqrt{\epsilon}x$, a direct calculation yields the desired limit. The proof for the $\pd{}{w}$-derivative is analogous. 
\end{proof}

\begin{rem}
Equation~\eqref{eq:rhodef} gives important insights on the dependence of the leverage parameter $\rho$ on the micro-structural parameters $p$ (asymmetry of order sizes) and $\gamma_\pm$ (asymmetry of self-excitement). Consider first the case of symmetric order size distributions $p = \tfrac{1}{2}$ as in \cite{el2018microstructural}. In this case
\[\rho = \frac{\gamma_+ - \gamma_-}{\sqrt{2(\gamma_+^2 + \gamma_-^2)}},\]
which only takes values within $(-\sfrac{1}{\sqrt{2}}, \sfrac{1}{\sqrt{2}}) \approx (-0.71,0.71)$, with boundaries attained in the limiting cases $\gamma_\pm \to \infty$. When also asymmetry of order sizes is allowed, then $\rho$ can be represented as a scalar product of unit length vectors
\[\rho = \begin{pmatrix} \sqrt{p}, & \sqrt{1-p} \end{pmatrix} \cdot \begin{pmatrix}  \sfrac{\gamma_+\sqrt{p}}{c} \\  \sfrac{-\gamma_-\sqrt{1-p}}{c} \end{pmatrix},\]
which confirms that $\rho \in (-1,1)$. In addition, the boundary cases $\rho = \pm 1$ can be attained when the vectors are of opposing resp. of the same direction, i.e., when $p \to 0$ and $p \to 1$. 
\end{rem}

Combining Lemma~\ref{lem:scale} with Theorem~\ref{thm:main} and \ref{thm:AFI} yields a first distributional convergence result.

\begin{prop}\label{prop:convergence}
Let $(X^\epsilon, \xi^\epsilon)$ be the rescaled AFI model \eqref{eq:AFI_scale}. Define $c, \rho$ as in Lemma~\ref{lem:scale} and set $\kappa_V(x) = c \kappa(x)$. Then, for any $t \ge 0$, 
\begin{equation}\label{eq:dist_conv}
X_t^\epsilon \xrightarrow{\epsilon \to 0} X_t \qquad \text{in distribution},
\end{equation}
where $(X, \xi)$ is a forward variance model with correlation parameter $\rho$, and kernel $\kappa_V$. 
\end{prop}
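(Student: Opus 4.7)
The plan is to leverage the affine CGF characterizations from Theorems~\ref{thm:AFI} and \ref{thm:main}: both the rescaled AFI model and its purported limit have cumulant generating functions given explicitly in terms of the unique solution of a (generalized) convolution Riccati equation. Setting $\bar g^\epsilon(t,u) := g^\epsilon(t,u)/\epsilon$ and using $\xi_0^\epsilon = \xi_0/\epsilon$, $\kappa^\epsilon = \kappa/\epsilon$, the CGF of $X_t^\epsilon$ becomes
\begin{equation*}
\log \E{e^{u(X_t^\epsilon - X_0)}} = \int_0^t \bar g^\epsilon(t-s,u)\,\xi_0(s)\,ds,
\end{equation*}
while $\bar g^\epsilon$ satisfies the rescaled convolution Riccati equation
\begin{equation*}
\bar g^\epsilon(t,u) \;=\; \frac{1}{\epsilon}\,R^\epsilon\!\left(u,(\kappa \star \bar g^\epsilon)(t,u)\right),
\end{equation*}
where the factor $1/\epsilon$ coming from $\kappa^\epsilon$ has been absorbed into $\bar g^\epsilon$. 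Lemma~\ref{lem:scale} identifies $\frac{1}{\epsilon}R^\epsilon(u,w) \to R_V(u,cw)$ pointwise, so the formal limit of this equation is precisely the AFV convolution Riccati equation \eqref{eq:Riccati} with kernel $\kappa_V = c\kappa$.

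The core step is to upgrade this formal limit to genuine uniform convergence $\bar g^\epsilon(\cdot,u) \to g(\cdot,u)$ on $[0,t]$, for each fixed $u \in (0,1)$. Applying Corollary~\ref{cor:Volterra} to $H_u^\epsilon(w) := \frac{1}{\epsilon}R^\epsilon(u,w)$ provides the a priori sandwich $w_*^\epsilon(u) \le \bar g^\epsilon(\cdot,u) \le 0$, where $w_*^\epsilon(u)$ is the unique non-positive root of $H_u^\epsilon$. Convexity of $H_u^\epsilon$, together with $H_u^\epsilon(0) \to R_V(u,0) < 0$ and coercivity $H_u^\epsilon(w) \to +\infty$ as $w \to -\infty$ uniformly in $\epsilon$ (via the estimates used in the proof of Theorem~\ref{thm:AFI}), imply that $\sup_\epsilon |w_*^\epsilon(u)| < \infty$, so the $\bar g^\epsilon$ are uniformly bounded on $[0,t]$. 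On the resulting common compact range of $w$, convexity promotes the pointwise convergence of both $H_u^\epsilon$ and $\partial_w H_u^\epsilon$ (Lemma~\ref{lem:scale}) to uniform convergence with a uniform Lipschitz constant in $w$. A Volterra Gronwall estimate (cf.\ \cite[Ch.~9]{gripenberg1990volterra}) applied to the integral equation satisfied by $\bar g^\epsilon - g$ then yields the desired uniform convergence on $[0,t]$.

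The remainder is routine. Dominated convergence, using the uniform bound on $\bar g^\epsilon$ and local integrability of $\xi_0$, gives $\log \E{e^{u(X_t^\epsilon - X_0)}} \to \log \E{e^{u(X_t - X_0)}}$ for every $u \in [0,1]$; the boundary cases $u \in \{0,1\}$ are immediate from $\bar g^\epsilon(\cdot,u) \equiv 0$. Since $e^{X_t^\epsilon}$ is a supermartingale with $\E{e^{X_t^\epsilon}} \le S_0$ uniformly in $\epsilon$, the family $\{X_t^\epsilon\}_\epsilon$ is tight; combined with convergence of its Laplace transform on the non-degenerate interval $[0,1]$—which uniquely determines the limit law by analyticity of MGFs in the interior of their domain of finiteness—this yields the distributional convergence \eqref{eq:dist_conv}.

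The principal obstacle will be the Volterra-stability analysis in the second step: establishing the uniform a priori bound on $\bar g^\epsilon$ and extracting a Gronwall estimate whose constants survive the limit $\epsilon \to 0$. This is where both parts of Lemma~\ref{lem:scale}—the convergence of $R^\epsilon/\epsilon$ \emph{and} of its $w$-derivative—are genuinely needed, together with the convexity of $R^\epsilon$ inherited from $\psi_\pm$.
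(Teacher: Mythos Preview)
Your overall strategy coincides with the paper's: rescale the convolution Riccati equation to obtain $\bar g^\epsilon = g^\epsilon/\epsilon$ satisfying $\bar g^\epsilon = \tfrac{1}{\epsilon}R^\epsilon(u,\kappa\star\bar g^\epsilon)$, show $\bar g^\epsilon \to g$ uniformly on compacts, and conclude convergence of the MGF on $[0,1]$. The difference lies in how the stability step is executed. The paper does not build a priori bounds and a Gronwall estimate by hand; instead it observes that, via Corollary~\ref{cor:Volterra}, the rescaled equation is equivalent to a non-linear Volterra equation of type \eqref{eq:Volterra} with right-hand side jointly continuous in $(t,\epsilon,u)$, and then invokes the general continuous-dependence theorem \cite[Thm.~13.1.1]{gripenberg1990volterra} to conclude uniform convergence in one stroke. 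Your route is more self-contained and makes explicit where the convergence of $\partial_w R^\epsilon$ (the second part of Lemma~\ref{lem:scale}) enters, at the cost of additional work; the paper's route is shorter but relies on an external black box. For the final step the paper likewise cites \cite[Prob.~30.4]{billingsley1986probability} directly (MGF convergence on an interval $\Rightarrow$ convergence in distribution), rather than combining tightness with analytic uniqueness as you propose.

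One small slip to fix: the sandwich you attribute to Corollary~\ref{cor:Volterra} is misquoted. With $a\equiv 0$ that corollary gives $H_u^\epsilon(0) < \bar g^\epsilon(t,u) < 0$, not $w_*^\epsilon(u) \le \bar g^\epsilon(t,u)$; the quantity $w_*^\epsilon(u)$ bounds the associated $f^\epsilon = \kappa\star\bar g^\epsilon$ (i.e.\ the argument of $H_u^\epsilon$), not $\bar g^\epsilon$ itself. This does not damage your argument---$H_u^\epsilon(0)=\tfrac{1}{\epsilon}R^\epsilon(u,0)\to R_V(u,0)$ still yields the uniform bound on $\bar g^\epsilon$, and your uniform control of $w_*^\epsilon(u)$ is exactly what is needed to restrict $w$ to a common compact for the Lipschitz/Gronwall step---but the bookkeeping should be corrected.
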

\begin{proof}
By Theorem~\ref{thm:AFI}, $g^\epsilon(t,u)$ in the CGF \eqref{eq:affine} of $X^\epsilon$ is the unique global solution of the generalized convolution Riccati equation \eqref{eq:Riccati_jump} and hence satisfies
\begin{equation}\label{eq:gen_Riccati_scaled}
\tfrac{1}{\epsilon}g^\epsilon(t,u) = \tfrac{1}{\epsilon} R^\epsilon\Big(u, \kappa^\epsilon \star g^\epsilon(t,u)\Big) = \tfrac{1}{\epsilon} R^\epsilon\Big(u, \kappa  \star \left(\tfrac{1}{\epsilon}g^\epsilon\right)(t,u)\Big).
\end{equation}
Note that $\frac{1}{\epsilon}R^\epsilon(u,w)$ is jointly continuous in all variables, and by Lemma~\ref{lem:scale} converges to $R_V(u,cw)$ as $\epsilon \to 0$. By Corollary~\ref{cor:Volterra}, equation \eqref{eq:gen_Riccati_scaled} can be transformed into a non-linear Volterra equation of type \eqref{eq:Volterra}, whose solution depends jointly continuous on $(t,\epsilon,u)$ by \cite[Thm.~13.1.1]{gripenberg1990volterra}. We conclude that $\tfrac{1}{\epsilon}g^\epsilon(t,u)$ converges, uniformly for $(t,u)$ in compacts, to $g(t,u)$ as $\epsilon \to 0$, where $g(t,u)$ is the unique solution (cf. Theorem~\ref{thm:main}) of 
\[g(t,u) = R_V\Big(u, c \kappa \star g(t,u)\Big) = R_V\Big(u, \kappa_V \star g(t,u)\Big).\]
Using Theorems~\ref{thm:main} and \ref{thm:AFI}, we conclude that
\begin{multline*}
\E{e^{uX^\epsilon_t}} = \exp\left(u X_0 + \int_0^t g^\epsilon(t-s,u)\xi^\epsilon_0(s)ds\right) \to\\ \exp\left(u X_0 + \int_0^t g(t-s,u)\xi_0(s)ds\right) = \E{e^{uX_t}},
\end{multline*}
as $\epsilon  \to 0$, i.e., the moment generating function of $X^\epsilon_t$ converges to the moment generating function of $X_t$ on $u \in [0,1]$. By \cite[Prob.~30.4]{billingsley1986probability}, convergence of moment generating functions on a (non-empty) interval implies the convergence in distribution in \eqref{eq:dist_conv}.
\end{proof}

The following example shows that the scaling in \eqref{eq:AFI_scale} is related to the `nearly unstable' limit of Hawkes models in \cite{euch2018characteristic}.
\begin{example}[Nearly unstable limit of bivariate Hawkes processes]\label{ex:unstable}We continue Example~\ref{ex:hawkes} and consider the bivariate Hawkes process from \cite{euch2018characteristic} with Mittag-Leffler kernel \eqref{eq:kernel_ML}. Introduce a small parameter $\epsilon$ and scale the kernel as
\[\varphi_\epsilon(x) = \tfrac{1}{\epsilon} x^{\alpha-1} E_{\alpha,\alpha}(-(\lambda  + \tfrac{\wt{\gamma}}{\epsilon}) x^\alpha).\]
In terms of its Laplace transform, this scaling becomes $\wh{\varphi}_\epsilon(z) = 1 / (\epsilon(z^\alpha  + \lambda) + \wt{\gamma})$. In particular, we have
\[\wt{\gamma} \int_0^\infty \varphi_\epsilon(x) dx = \wt{\gamma} \wh{\varphi}(0) = \frac{\wt{\gamma}}{\epsilon \lambda + \wt{\gamma}} \to 1,\]
i.e. as $\epsilon \to 0$ the stability condition of the Hawkes process approaches the critical value $1$ (hence `nearly unstable'). 
From \eqref{eq:resolvent_laplace}, the Laplace transform of the resolvent kernel $\kappa_\epsilon(x)$  can be determined as
\[\wh{\kappa}_\epsilon(z) = \frac{\wh{\varphi}_\epsilon(z)}{1 - \wh{\varphi}_\epsilon(z)} = \frac{1/\epsilon}{z^\alpha + \lambda}\]
and thus the resolvent kernel is given by
\[\kappa_\epsilon(x) = \tfrac{1}{\epsilon} x^{\alpha-1} E_{\alpha,\alpha}(- \lambda x^\alpha) =  \tfrac{1}{\epsilon}\kappa(x).\]
Together with square-root scaling of the jump size we are exactly in the setting of \eqref{eq:AFI_scale} and conclude from Proposition~\ref{prop:convergence} that the (univariate) marginal distributions of $X$ converge to those of the corresponding AFV model, i.e. the rough Heston model (cf. Example~\ref{ex:rough}). Theorem~\ref{thm:multi_conv} below strengthens this result to convergence of all finite-dimensional marginal distributions.
\hfill $\diamond$
\end{example}

\subsection{The joint moment generating function}
In this subsection, we derive results on the joint moment generating function of log-price and forward variance and of the finite-dimensional marginal distributions of $X$. 
\begin{ass}\label{ass:jmgf}
We assume that $(X,\xi)$ is either an AFV model or an AFI model, and we write $R(u,w)$ for the corresponding function $R_V(u,w)$ or $R_\lambda(u,w)$. In addition we denote, for any $u \in (0,1)$, by $w_*(u)$ the unique root where
\[R(u,w_*(u)) = 0, \quad \text{and} \quad w_*(u) < 0.\]
\end{ass}
Note that the function $R(u,w)$ has already been studied in the context of affine stochastic volatility models in \cite[Lem.~3.2ff]{keller-ressel2011moment}. In particular, we note that $R(u,w)$ and $w_*(u)$ are convex functions for $u \in [0,1], w \le 0$ and continuously differentiable on the interior of their domain.

\begin{prop}\label{prop:jmgf}
Let $(X,\xi)$ be an AFV or an AFI model and let $R(u,w)$ and $w_*(u)$ be defined as in Assumption~\ref{ass:jmgf}. Let $\Delta > 0$, $T' = T + \Delta$, and let $h$ be a piecewise continuous $\Rminus$-valued function on $[0,\Delta]$, such that $w_*(u) < \int_0^\Delta \kappa(\Delta-s)h(s)ds$. Then 
\begin{multline}\label{eq:joint_affine}
\Econd{\exp\left(u X_T + \int_T^{T'} h(T'-s)\xi_T(s)ds\right)}{\cF_t} = \\
= \exp \left(uX_t + \int_t^{T'} g(T'-s,u,h) \xi_t(s)ds\right),
\end{multline}
where $g(.,u,h): \Rplus \to \Rminus$ is the unique continuous solution of the (generalized) convolution Riccati equation
\begin{align}\label{eq:Riccati_joint}
g(t,u,h) &= R\Big(u, \int_0^t \kappa(t-s) g(s,u,h) ds\Big), \quad t \in [\Delta,\infty)\\
\intertext{with initial condition}
g(t,u,h) &= h(t), \qquad t \in [0,\Delta).\label{eq:ic}
\end{align}
\end{prop}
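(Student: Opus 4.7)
The plan is to reuse the martingale machinery from the proofs of Theorems~\ref{thm:main} and \ref{thm:AFI}, but now with a driver $g(\cdot,u,h)$ that is forced to coincide with $h$ on the terminal window $[0,\Delta]$. Concretely, I would set
\[G_t := \int_t^{T'} g(T'-s,u,h)\,\xi_t(s)\,ds, \qquad M_t := \exp(uX_t + G_t),\]
and aim to show that $M$ is a true martingale on $[0,T]$. Equation \eqref{eq:joint_affine} then follows from $M_t = \Econd{M_T}{\cF_t}$ and the observation that at $t=T$ every $s\in[T,T']$ satisfies $T'-s\in[0,\Delta]$, so \eqref{eq:ic} gives $g(T'-s,u,h)=h(T'-s)$, recovering the exponent on the left-hand side.

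The first step is to establish well-posedness of the Volterra problem \eqref{eq:Riccati_joint}--\eqref{eq:ic}. On $[0,\Delta)$ we simply put $g(t,u,h)=h(t)$. For $t\ge\Delta$ equation \eqref{eq:Riccati_joint} is a non-linear Volterra equation in $g$, with the $h$-contribution absorbed into the convolution as a known, bounded forcing term. The hypothesis $w_*(u)<\int_0^\Delta\kappa(\Delta-s)h(s)\,ds\le 0$ places the value of $(\kappa\star g)$ at $t=\Delta$ strictly inside the interval $(w_*(u),0]$ on which $H_u(w):=R(u,w)$ is finite, strictly convex, and non-positive. This is exactly the setup of the auxiliary Corollary~\ref{cor:Volterra}, applied after a time-shift by $\Delta$, which I would invoke to obtain a unique $\Rminus$-valued global continuation on $[\Delta,\infty)$.

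Next, I would apply Lemma~\ref{lem:G} in the AFV case, or its pure-jump analogue from the proof of Theorem~\ref{thm:AFI} in the AFI case, with the driver $g(\cdot,u,h)$ and terminal date $T'$ in place of $T$. The resulting It\^o decomposition takes the form
\[\frac{dM_t}{M_t} = \text{loc.\ mg.} + \Big\{R\big(u,(\kappa\star g)(T'-t,u,h)\big) - g(T'-t,u,h)\Big\}V_t\,dt\]
(with $V_t$ replaced by $\lambda_{t-}$ in the AFI case). For $t\in[0,T]$ we have $T'-t\in[\Delta,T']$, which is exactly the range on which \eqref{eq:Riccati_joint} forces the bracket to vanish, so $M$ is a local martingale on $[0,T]$. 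Promotion to a true martingale proceeds exactly as in \eqref{eq:Lp_bound}: since $g(\cdot,u,h)\le 0$ on all of $\Rplus$ (including on $[0,\Delta]$ by the assumption on $h$) and $\xi_t(\cdot)\ge 0$, we have $G_t\le 0$, whence $|M_t|^{1/u}\le e^{X_t}=S_t$; the supermartingale property of $S$ then supplies $L^{1/u}$-uniform integrability for $1/u>1$.

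The main obstacle I anticipate is precisely the well-posedness step. The strict inequality $w_*(u)<\int_0^\Delta\kappa(\Delta-s)h(s)\,ds$ is what prevents the Volterra solution from hitting the blow-up boundary at the junction $t=\Delta$; one must additionally verify that the monotonicity/convexity properties of $H_u$ built into Corollary~\ref{cor:Volterra} keep $(\kappa\star g)(t,u,h)$ above $w_*(u)$ for every $t>\Delta$. Once this is in place, the rest of the argument is a direct transcription of the computations already carried out for Theorems~\ref{thm:main} and \ref{thm:AFI}.
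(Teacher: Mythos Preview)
Your proposal is correct and follows essentially the same route as the paper: define $M_t=\exp(uX_t+G_t)$ with $G_t=\int_t^{T'}g(T'-s,u,h)\xi_t(s)ds$, apply Lemma~\ref{lem:G} (resp.\ its jump analogue) to see the drift vanishes on $[0,T]$ precisely because $T'-t\in[\Delta,T']$ is the range where \eqref{eq:Riccati_joint} holds, and upgrade to a true martingale via the $L^{1/u}$-bound $|M_t|^{1/u}\le S_t$. The only cosmetic difference is that the paper packages the well-posedness step---your ``Corollary~\ref{cor:Volterra} applied after a time-shift by $\Delta$''---into a dedicated Corollary~\ref{cor:Volterra_ic}, whose proof does exactly that shift and also delivers the bound $(\kappa\star g)(t,u,h)>w_*(u)$ for all $t\ge 0$ that you flag as the main obstacle.
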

\begin{rem}
Note that the expression \eqref{eq:joint_affine} for the joint moment generating function does not correspond to the exponential-affine transform formula (4.6) of \cite{jaber2017affine}.  Specifically, $h$ constant in  \eqref{eq:joint_affine} would give the joint moment generation function of $X_T$ and the forward variance swap $\int_T^{T'}\,\xi_T(s)\,ds$.  In contrast, $f$ constant in (4.6) of \cite{jaber2017affine} would give the the joint moment generation function of $X_T$ and quadratic variation $\int_0^{T}\,V_s\,ds$.
\end{rem}
\begin{proof}The existence of a unique, $\Rminus$-valued solution to \eqref{eq:Riccati_joint} with initial condition \eqref{eq:ic} follows from an application of Corollary~\ref{cor:Volterra_ic} with $H_u(w) = R(u,w)$. In the proofs of Theorem~\ref{thm:main} and Theorem~\ref{thm:AFI}, we have already established that $H_u$ satisfies the necessary conditions to apply the corollary. Next, we define $G_t^\Delta = \int_t^{T'}g(T'-s,u,h)\xi_t(s)ds$ and specialize to the forward variance case. By Lemma~\ref{lem:G}, it holds that
\[dG_t^\Delta = -g(T'-t,u,h)V_t dt + \left( \int_t^{T'}  \kappa(r-t) g(T'-r,u,h)dr \right)\sqrt{V_t} d W_t\]
and that $M^\Delta_t$ is a local martingale on $[0,T]$ if
\[g(T'-t,u,h) = R\left(u, \int_t^{T'} \kappa(r-t) g(T'-r,u,h) dr\right).\]
Setting $\tau = T'-t \in [\Delta, T']$ this is exactly \eqref{eq:Riccati_joint}. We conclude that $M^\Delta_t$ is a local martingale on $[0,T]$, and -- repeating the argument following \eqref{eq:Lp_bound} -- even a true martingale. Using the initial condition \eqref{eq:ic}, we observe that
\begin{multline*}
\Econd{\exp\left(u X_T + \int_T^{T'} h(T'-s)\xi_T(s)ds\right)}{\cF_t} = \Econd{M_T^\Delta}{\cF_t} = \\
= M_t = \exp \left(uX_t + \int_t^{T'} g(T'-s,u,h) \xi_t(s)ds\right),
\end{multline*}
showing \eqref{eq:affine}.
The proof in the AFI case is analogous with the following modifications: $W_t$ has to be substituted by the pure-jump martingale $\tilde{J}_t^X$ and $V_t$ by the intensity $\lambda_{t-}$. It\^o's formula for jump processes can then be applied as in the proof of Theorem~\ref{thm:AFI}.
\end{proof}

\begin{prop}\label{prop:multi_mgf}
Let $(X,\xi)$ be an AFV or an AFI model and let $R(u,w)$ and $w_*(u)$ be defined as in Assumption~\ref{ass:jmgf}.  Let $t_0 \le t_1 \le \dotsm t_n = T$ and $u = (u_0, \dotsc, u_{n-1}) \in (0,1)^n$ be such that $w_*(u_0) \le w_*(u_2) \le \dotsm w_*(u_{n-1})$. 
Then, for all $k \in \set{0, \dotsc, n-1}$, 
\begin{multline}\label{eq:multi_affine}
\Econd{\exp\left(u_{k} (X_{t_{k+1}} - X_{t_{k}}) + \dotsm + u_{n-1} (X_{T} - X_{t_{n-1}})\right)}{\cF_{t_k}} = \\
= \exp \left(\int_{t_k}^{T} g_{k}(T-s,u) \xi_{t_k}(s)ds\right),
\end{multline}
where the functions $g_k$ are defined by backward recursion as the solutions of the convolution Riccati equations
\begin{align}\label{eq:Riccati_multi}
g_{k}(t,u) &= R\Big(u_k, \int_0^t \kappa(t-s) g_k(s,u) ds\Big), \quad t \in [T - t_{k+1}, T- t_k)\\
\intertext{with initial conditions}
g_k(t,u) &= g_{k+1}(t,u), \qquad t \in [0,T- t_{k+1}).\label{eq:ic_multi}
\end{align}
\end{prop}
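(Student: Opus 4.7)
The plan is to prove the statement by backward induction on $k$, running from $k = n-1$ down to $k = 0$, and at each step to invoke Proposition~\ref{prop:jmgf} with the solution from the previous step serving as the initial datum $h$.

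For the base case $k = n-1$, the left-hand side of \eqref{eq:multi_affine} is simply $\Econd{\exp(u_{n-1}(X_T - X_{t_{n-1}}))}{\cF_{t_{n-1}}}$, which by Theorem~\ref{thm:main} (in the AFV case) or Theorem~\ref{thm:AFI} (in the AFI case) — applied to the model started from $t_{n-1}$ with forward variance $\xi_{t_{n-1}}(\cdot)$ and shifted time argument — equals $\exp\bigl(\int_{t_{n-1}}^T g_{n-1}(T-s,u)\xi_{t_{n-1}}(s)ds\bigr)$, where $g_{n-1}(\cdot,u)$ is the unique $\Rminus$-valued global solution of the standard convolution Riccati equation with parameter $u_{n-1}$ on $[0, T - t_{n-1})$ (the initial-condition window $[0, T - t_n) = \emptyset$ is vacuous).

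For the inductive step, assume the identity at level $k+1$. Factor out the $\cF_{t_k}$-measurable term $e^{-u_k X_{t_k}}$ and apply the tower property to condition first on $\cF_{t_{k+1}}$; by the inductive hypothesis the inner expectation equals $\exp\bigl(\int_{t_{k+1}}^T g_{k+1}(T-s,u)\xi_{t_{k+1}}(s)ds\bigr)$. Setting $\Delta = T - t_{k+1}$ and $h(\tau) := g_{k+1}(\tau,u)$ for $\tau \in [0,\Delta]$, the remaining expectation is exactly of the form covered by Proposition~\ref{prop:jmgf} (with $t_{k+1}$ playing the role of $T$, $T$ playing the role of $T'$, and parameter $u_k$). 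This yields $\exp\bigl(u_k X_{t_k} + \int_{t_k}^T \widetilde{g}(T-s)\xi_{t_k}(s)ds\bigr)$, where $\widetilde{g}$ solves the (generalized) convolution Riccati equation associated with $R(u_k,\cdot)$ on $[\Delta,\infty)$ with initial condition $\widetilde{g}|_{[0,\Delta)} = h = g_{k+1}|_{[0,\Delta)}$. But this is precisely the defining recursion \eqref{eq:Riccati_multi}--\eqref{eq:ic_multi} for $g_k$, so uniqueness from Corollary~\ref{cor:Volterra_ic} gives $\widetilde{g} = g_k(\cdot,u)$ on $[0, T - t_k)$. Cancelling $e^{\pm u_k X_{t_k}}$ closes the induction.

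The main technical obstacle is verifying, at each inductive step, the admissibility hypothesis of Proposition~\ref{prop:jmgf}, namely that $h = g_{k+1}(\cdot,u)$ is piecewise continuous, $\Rminus$-valued on $[0,\Delta]$, and satisfies $w_*(u_k) < (\kappa \star g_{k+1})(\Delta,u)$. Piecewise continuity and nonpositivity follow inductively from the construction (continuity across the joins $t = T - t_{j+1}$ is immediate because the initial condition in \eqref{eq:ic_multi} matches $g_{j+1}$ at the boundary, and the Riccati side produces a continuous extension). The strict inequality on the convolved value is where the monotonicity hypothesis $w_*(u_0) \le w_*(u_1) \le \dotsm \le w_*(u_{n-1})$ is used: since $g_{k+1}(t,u) \in [w_*(u_{k+1}), 0]$ by the a-priori estimate \eqref{eq:g_estim} and $w_*(u_{k+1}) \ge w_*(u_k)$, one obtains that $(\kappa \star g_{k+1})(\Delta,u)$ lies strictly above $w_*(u_k)$ (a strict inequality is ensured because $g_{k+1}(0,u) = R(u_{k+1},0) > w_*(u_{k+1}) \ge w_*(u_k)$ and $\kappa$ is positive on a neighbourhood of $0$). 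This is the only place in the argument that is non-routine; everything else reduces to a bookkeeping application of Proposition~\ref{prop:jmgf} and the uniqueness theory for Volterra--Riccati equations.
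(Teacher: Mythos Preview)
Your overall architecture --- backward induction on $k$, tower law, and an appeal to Proposition~\ref{prop:jmgf} with $h = g_{k+1}(\cdot,u)$ --- is exactly the paper's argument. The problem is in the one place you flag as ``non-routine'': your verification of the admissibility hypothesis $w_*(u_k) < (\kappa \star g_{k+1})(\Delta,u)$ does not work as written.

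First, the claim $g_{k+1}(t,u) \in [w_*(u_{k+1}),0]$ is not what \eqref{eq:g_estim} says; that estimate bounds $g$ between $H(a(t))$ and $H(r_1(t))$, neither of which is $w_*$. The quantity bounded below by $w_*$ is $f = \kappa \star g$ (see \eqref{eq:f_estim} and \eqref{eq:Delta_estim}), not $g$ itself. Second, even if you had a pointwise bound $g_{k+1} \ge c$ for some constant $c<0$, this would only give $(\kappa \star g_{k+1})(\Delta) \ge c \int_0^\Delta \kappa(s)\,ds$, which has no reason to dominate $c$ (or $w_*(u_k)$) unless you separately control the $L_1$-mass of $\kappa$. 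Third, your side remark $g_{k+1}(0,u) = R(u_{k+1},0)$ is incorrect: chasing the initial conditions \eqref{eq:ic_multi} back to the start gives $g_{k+1}(0,u) = g_{n-1}(0,u) = R(u_{n-1},0)$.

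The paper's fix is simple and clean: include the bound $w_*(u_k) < (\kappa \star g_k)(\Delta_k,u)$ (with $\Delta_k = T - t_k$) as part of the inductive hypothesis. For the base case $k = n-1$ this is exactly the estimate \eqref{eq:Delta_estim} supplied by Corollary~\ref{cor:Volterra_ic}. In the step from $k$ to $k-1$, the ordering $w_*(u_{k-1}) \le w_*(u_k)$ combined with the inductive bound gives precisely the admissibility needed to invoke Proposition~\ref{prop:jmgf}; and once $g_{k-1}$ is constructed, the same estimate \eqref{eq:Delta_estim} refreshes the bound at level $k-1$. So the monotonicity condition on $w_*$ is used only to pass the \emph{convolution} bound from one level to the next, not to control $g_{k+1}$ pointwise.
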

\begin{rem}
Note that for $k = n-1$ equation  \eqref{eq:Riccati_multi} becomes a (generalized) convolution Riccati equation \emph{without} initial condition and \eqref{eq:ic_multi} becomes void (i.e. a condition over an empty set).
\end{rem}
\begin{proof}We show the result by backward induction on $k$: For $k = n-1$ the proposition is equivalent to Theorem~\ref{thm:main}, when $(X,\xi)$ is an  AFV model, and to Theorem~\ref{thm:AFI}, when $(X,\xi)$ is an AFI model. Setting $\Delta_k := T - t_k$, we obtain from \eqref{eq:Delta_estim} in Corollary~\ref{cor:Volterra_ic} that 
\begin{equation}\label{eq:wstar_bound}
w_*(u_{n-1}) < \int_0^{\Delta_{n-1}} \kappa(\Delta_{n-1}- s)g_{n-1}(s,u)ds.
\end{equation}
For the induction step assume that \eqref{eq:multi_affine} has been shown for a certain $k$ and that \eqref{eq:wstar_bound} holds with $n-1$ replaced by $k$. Writing
\[Z_{k-1} := \exp\left(u_{k-1} (X_{t_k} - X_{t_{k-1}}) + \dotsm + u_n (X_{T} - X_{t_{n-1}})\right)\]
and applying the tower law of conditional expectations, we have
\begin{multline*}
\Econd{Z_{k-1}}{\cF_{t_{k-1}}} = \Econd{\exp\left(u_{k-1} (X_{t_k} - X_{t_{k-1}})\right) \cdot \Econd{Z_{k}}{\cF_{t_{k}}}}{\cF_{t_{k-1}}} =  \\
 = \Econd{\exp\left(u_{k-1} (X_{t_k} - X_{t_{k-1}}) + \int_{t_{k}}^{T} g_{k}(T-s,u) \xi_{t_{k}}(s)ds\right)}{\cF_{t_{k-1}}}.
\end{multline*}
Since
\[w_*(u_{k-1}) \le w_*(u_k) < \int_0^{\Delta_k} \kappa(\Delta_k- s)g_{k}(s,u)ds\]
we may apply Proposition~\ref{prop:jmgf} with $\Delta_k$ and obtain \eqref{eq:multi_affine} with $g_{k-1}$ as solution of \eqref{eq:Riccati_multi} with initial condition~\eqref{eq:ic_multi}. Finally, \eqref{eq:wstar_bound} holds with $n-1$ replaced by $k-1$, using the estimate \eqref{eq:Delta_estim} from Corollary~\ref{cor:Volterra_ic}.
\end{proof}

\subsection{Convergence of finite-dimensional marginal distributions}

\begin{thm}\label{thm:multi_conv}
Let $(X^\epsilon, \xi^\epsilon)$ be the rescaled AFI model \eqref{eq:AFI_scale}, define $c, \rho$ as in Lemma~\ref{lem:scale} and set $\kappa_V(x) = c\kappa(x)$. Then, for any $n \in \NN$ and $0 = t_0 \le t_1 \le \dotsm t_n  =T$, 
\begin{equation}\label{eq:multi_dist_conv}
\left(X_{t_0}^\epsilon, \dotsc, X_{t_n}^\epsilon\right) \xrightarrow{\epsilon \to 0} \left(X_{t_0}, \dotsc, X_{t_n} \right) \qquad \text{in distribution},
\end{equation}
where $(X, \xi)$ is the AFV model with correlation parameter $\rho$ and kernel $\kappa_V$.
\end{thm}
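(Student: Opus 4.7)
The plan is to reduce the statement to convergence of the joint moment generating function (MGF) and then exploit Proposition~\ref{prop:multi_mgf} together with continuous dependence of the convolution Riccati equation on both its nonlinearity and its initial datum. Since $X_{t_0}^\epsilon=X_0$ is deterministic, joint convergence of $(X_{t_0}^\epsilon,\dotsc,X_{t_n}^\epsilon)$ is equivalent to joint convergence of the increments $(X_{t_1}^\epsilon-X_{t_0}^\epsilon,\dotsc,X_{t_n}^\epsilon-X_{t_{n-1}}^\epsilon)$. By the multivariate version of \cite[Prob.~30.4]{billingsley1986probability}, it suffices to show that the joint MGF of these increments converges, for $u=(u_0,\dotsc,u_{n-1})$ in some non-empty open subset of $(0,1)^n$ compatible with the monotonicity condition of Proposition~\ref{prop:multi_mgf}, to the corresponding MGF for the AFV increments.

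By Proposition~\ref{prop:multi_mgf} applied to $(X^\epsilon,\xi^\epsilon)$, this MGF equals
\[\exp\!\Bigl(\int_0^T g_0^\epsilon(T-s,u)\,\xi_0^\epsilon(s)\,ds\Bigr)=\exp\!\Bigl(\int_0^T \tilde g_0^\epsilon(T-s,u)\,\xi_0(s)\,ds\Bigr),\]
where $\tilde g_k^\epsilon:=\tfrac{1}{\epsilon}g_k^\epsilon$ and the equality uses the scaling $\xi_0^\epsilon=\tfrac{1}{\epsilon}\xi_0$. Substituting $\kappa^\epsilon=\tfrac{1}{\epsilon}\kappa$ into \eqref{eq:Riccati_multi}, the rescaled $\tilde g_k^\epsilon$ satisfies
\[\tilde g_k^\epsilon(t,u)=\tfrac{1}{\epsilon}R^\epsilon\bigl(u_k,(\kappa\star\tilde g_k^\epsilon)(t,u)\bigr),\quad t\in[T-t_{k+1},T-t_k),\]
with initial condition $\tilde g_k^\epsilon(t,u)=\tilde g_{k+1}^\epsilon(t,u)$ on $[0,T-t_{k+1})$. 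By Lemma~\ref{lem:scale}, $\tfrac{1}{\epsilon}R^\epsilon(u,w)\to R_V(u,cw)$ locally uniformly, so the formal limit is precisely the Riccati equation for the AFV model with kernel $\kappa_V=c\kappa$, whose unique solution is the corresponding $g_k^V$.

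It therefore remains to show $\tilde g_k^\epsilon\to g_k^V$ uniformly on compact subsets of $\Rplus$, which I prove by backward induction on $k$. The base case $k=n-1$ (no initial condition) is exactly Proposition~\ref{prop:convergence}: its proof carries over verbatim via the joint continuous dependence of the nonlinear Volterra equation on parameters provided by \cite[Thm.~13.1.1]{gripenberg1990volterra}. For the inductive step, assuming $\tilde g_k^\epsilon\to g_k^V$ uniformly on $[0,T-t_{k+1}]$, I invoke Corollary~\ref{cor:Volterra_ic} (with its accompanying stability statement) to transfer this convergence through the Riccati equation with the now $\epsilon$-dependent initial datum $\tilde g_k^\epsilon$, obtaining $\tilde g_{k-1}^\epsilon\to g_{k-1}^V$ uniformly on $[0,T-t_{k-1}]$.

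The main obstacle I anticipate is verifying that the compatibility inequality $w_*(u_{k-1})<\int_0^{\Delta_k}\kappa(\Delta_k-s)g_k^V(s,u)\,ds$ used to invoke Proposition~\ref{prop:multi_mgf} for the limit model transfers to the $\epsilon$-model for all sufficiently small $\epsilon$. This requires that the root $w_*^\epsilon(u_{k-1})$ of $R^\epsilon(u_{k-1},\cdot)$ converges (after an $\epsilon$-adjustment inherited from the scaling of $R^\epsilon$) to $w_*(u_{k-1})$, which follows from Lemma~\ref{lem:scale} and the implicit function theorem applied to the strictly convex nonlinearity, combined with the uniform convergence $\tilde g_k^\epsilon\to g_k^V$ on the compact $[0,\Delta_k]$ to preserve the strict inequality. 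With this in hand, passing to the limit $\epsilon\to 0$ inside the MGF formula completes the proof.
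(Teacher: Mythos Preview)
Your proposal follows essentially the same route as the paper's proof: reduce to convergence of the joint MGF of increments, invoke Proposition~\ref{prop:multi_mgf} for both the $\epsilon$-model and the AFV limit, and then use Lemma~\ref{lem:scale} together with the continuous-dependence result \cite[Thm.~13.1.1]{gripenberg1990volterra} (packaged through Corollary~\ref{cor:Volterra_ic}) to pass to the limit in the iterated convolution Riccati equations.

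The one place where the paper is more explicit than your sketch is the construction of the parameter set on which Proposition~\ref{prop:multi_mgf} applies \emph{simultaneously} to all small $\epsilon$ and to the limit. Note that the hypothesis of Proposition~\ref{prop:multi_mgf} is the ordering $w_*(u_0)\le\dotsm\le w_*(u_{n-1})$, not the compatibility inequality of Proposition~\ref{prop:jmgf} (the latter follows from the former inside the proof, via \eqref{eq:Delta_estim}). The paper handles this by using the convergence of \emph{partial derivatives} in Lemma~\ref{lem:scale} and the implicit function theorem to obtain $w_*^\epsilon\to\tfrac{1}{c}w_*$ together with $(w_*^\epsilon)'\to\tfrac{1}{c}w_*'$, uniformly on compacts by convexity. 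It then computes $w_*$ explicitly, identifies a subinterval $I\subset(0,u_*)$ on which $w_*$ is strictly decreasing, and concludes that every $w_*^\epsilon$ is also decreasing on $I$ for $\epsilon$ small. The set $D=\{u\in I^n:u_0\ge\dotsm\ge u_{n-1}\}$ then has nonempty interior and the ordering condition holds on $D$ for all small $\epsilon$ at once. Your ``main obstacle'' paragraph has the right ingredients (implicit function theorem, convergence of $w_*^\epsilon$) but you should be aware that pointwise convergence of $w_*^\epsilon$ alone is not quite enough; you need the monotonicity to be stable under the limit, which is why the paper invokes convergence of derivatives.
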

\begin{proof}By Lemma~\ref{lem:scale} $\tfrac{1}{\epsilon}R^\epsilon(u,w)$ converges to $R_V(u,cw)$ and the same holds true for the partial derivatives with respect to $u$ and $w$. Therefore, by the implicit function theorem, also $w^\epsilon_*(u)$ and $\pd{}{u}w^\epsilon_*(u)$ converge to $\tfrac{1}{c}w_*(u)$ and $\tfrac{1}{c}w'_*(u)$ as $\epsilon \to 0$ for all $u \in (0,1)$. Moreover, since the $w_*^\epsilon$ are convex functions of $u$, the convergence is uniform on compacts (cf. \cite[Thm.~10.8]{rockafellar1970convex}). The limit $\tfrac{1}{c}w_*(u)$ can be calculated explicitly and is given by 
\[\tfrac{1}{c}w_*(u) = \tfrac{1}{c} \left(-\rho u + \sqrt{\rho^2u^2 + (u - u^2)}\right).\]
It is easy to see that $w_*$ is decreasing on $(0,u_*)$ and increasing on $(u_*,1)$, where 
\[u_* := \begin{cases}\frac{1}{2} \frac{1 - |\rho|}{1 - \rho^2} \qquad &\text{if }\rho \in (0,1)\\ \frac{1}{4} \qquad &\text{if } |\rho| = 1. \end{cases}\]
We conclude that there is $N \in \NN$ and a closed interval $I \subset (0,u_*)$ with non-empty interior, such that $u \mapsto w_*(u)$ and $u \mapsto w_*^\epsilon(u)$ are decreasing on $I$ for all $\epsilon \le 1/N$. Introduce the set
\[D := \set{u \in I^n: u_0 \ge u_2 \ge \dotsm \ge u_{n-1}} \subset (0,1)^n\]
and note that also $D$ is closed with non-empty interior. In addition, $w^\epsilon_*(u_0) \le w^\epsilon_*(u_2) \le \dotsm w^\epsilon_*(u_{n-1})$ for all $u = (u_0, \dotsc, u_{n-1}) \in D$ and $\epsilon \le 1/N$, and the same holds for $w_*$. From 
Proposition~\ref{prop:multi_mgf} we conclude that the joint moment generating function of the increments $(X^\epsilon_{t_1} - X^\epsilon_{t_0}, X^\epsilon_{t_2} - X^\epsilon_{t_1}, \dotsc, X^\epsilon_{t_n} - X^\epsilon_{t_{n-1}})$ is of the form 
\begin{multline*}
Z^\epsilon(u) := \E{\exp\left(u_{0} (X^\epsilon_{t_{1}} - X^\epsilon_{t_{0}}) + \dotsm + u_{n-1} (X^\epsilon_{T} - X^\epsilon_{t_{n-1}})\right)} = \\
= \exp \left(\int_{0}^{T} g^\epsilon_{0}(T-s,u) \xi_{t_0}(s)ds\right),
\end{multline*}
for any $u \in D$ and $\epsilon \le 1/N$, where $g^\epsilon_0$ satisfies the iterated Riccati convolution equations \eqref{eq:Riccati_multi} with $R(u,w) = R^\epsilon(u,w)$. By Corollary~\ref{cor:Volterra_ic} each of these equations can be transformed into a non-linear Volterra equation, whose solution depends continuously on $(\epsilon,t,u)$ by \cite[Thm.~13.1.1]{gripenberg1990volterra}. In addition, Lemma~\ref{lem:scale} yields the convergence $\tfrac{1}{\epsilon}R^\epsilon(u,w) \to R_V(u,cw)$. Hence we conclude  --  as in the proof of Proposition~\ref{prop:convergence} -- that $\tfrac{1}{\epsilon}g_0^\epsilon(t,u)$ converges, uniformly for $(t,u)$ in compacts, to $g_0(t,u)$ as $\epsilon \to 0$, where $g_0(t,u)$ is the unique solution of the iterated Riccati convolution equations \eqref{eq:Riccati_multi} with $R(u,w) = R_V(u,cw)$. Consider now the joint moment generating function $Z(u)$ of the increments $(X_{t_1} - X_{t_0}, X_{t_2} - X_{t_1}, \dotsc, X_{t_n} - X_{t_{n-1}})$ of the AFV model with parameter $\rho$ and kernel $\kappa_V = c \kappa$. The convergence $\tfrac{1}{\epsilon}g_0^\epsilon(t,u) \to g_0(t,u)$ together with Proposition~\ref{prop:multi_mgf} yields
\[Z^\epsilon(u) = \exp \left(\int_{0}^{T} g^\epsilon_{0}(T-s) \xi^\epsilon_{0}(s)ds\right) \to \exp \left(\int_{0}^{T} g_{0}(T-s,u) \xi_{0}(s)ds\right) = Z(u)\]
for all $u \in D$. By \cite[Thm.~29.4 and Prob.~30.4]{billingsley1986probability} convergence of the moment generating function on a set with non-empty interior implies convergence in distribution, and \eqref{eq:multi_dist_conv} follows. 
\end{proof}

\begin{appendix}
\normalsize
\section{Some results on Volterra equations with convex non-linearity}\label{app:volterra}

We show some results on Volterra equations with convex non-linearity, of the type appearing in Theorem~\ref{thm:main} and \ref{thm:AFI}. On the non-linearity we impose the following assumptions:
\begin{ass}\label{ass:convex}
The function $H: (-\infty, \wmax] \to \RR$ is continuously differentiable and convex with a unique root $H(w_*) = 0$ in $(-\infty,\wmax]$. Moreover, $H'(w_*) < 0$ and $H(\wmax) < 0$.
\end{ass}
For a function $H$ satisfying Assumption~\ref{ass:convex}, we set
\[w_0 = \argmin_{w \in (-\infty,\wmax]} H(w);\]
if the minimum is not unique (i.e., if $H$ has a flat part), then $w_0$ shall denote the leftmost minimizer. Note that either
\begin{itemize}
\item $w_0 = \wmax$, in which case $H$ is strictly decreasing on $(-\infty, \wmax]$; or
\item $w_0 < \wmax$, in which case $H$ is strictly decreasing on $(-\infty, w_0)$ and increasing on $[w_0,\wmax]$.
\end{itemize}
In any case, $w_* < w_0 \le \wmax$ holds true. Also the following definition will be useful:
\begin{defn}\label{def:dec_env}
Let $H$ be a function satisfying Assumption~\ref{ass:convex}. The \emph{decreasing envelope} of $H$ is defined as
\begin{equation}\label{eq:dec_env}
\overline{H} := \begin{cases}H(w), \qquad &w \le w_0\\H(w_0) ,\qquad &w \in [w_0,\wmax] \end{cases}.
\end{equation}
\end{defn}
Clearly $\overline{H}$ also satisfies Assumption~\ref{ass:convex}, but is in addition decreasing and satisfies $\overline{H} \le H$.  Both Assumption~\ref{ass:convex} and Definition~\ref{def:dec_env} are illustrated in Figure~\ref{fig:H}.

\begin{figure}[htbp]
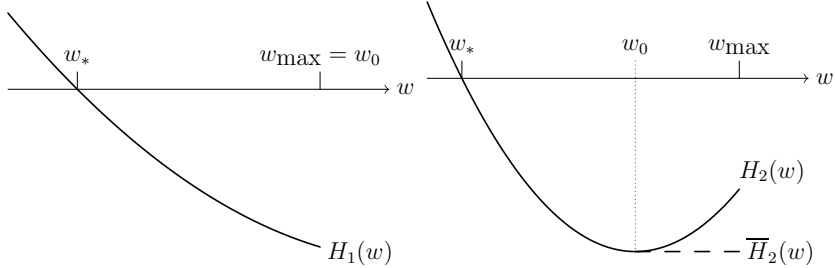

\begin{center}
\includegraphics[width=0.45\textwidth]{H_illu2.pdf}
\includegraphics[width=0.45\textwidth]{H_illu.pdf}
\caption{Illustration of two convex functions $H_1, H_2$ satisfying Assumption~\ref{ass:convex}. While $H_1$ is monotone decreasing, $H_2$ is not, and its decreasing envelope $\overline{H}_2$ is also shown.}
\label{fig:H}
\end{center}
\end{figure}

\begin{lem}\label{lem:H_prelim} Let $H: (-\infty, \wmax] \to \RR$ be a convex function that satisfies Assumption~\ref{ass:convex}; in particular it has a root $H(w_*) = 0$. Then
\begin{enumerate}[(a)]
\item For any $a \in (w_*,\wmax]$ the function
\begin{equation}\label{eq:Q1_def}
w \mapsto Q_1(w,a) = -\int_w^a\frac{d\zeta}{H(\zeta)},
\end{equation}
maps $(w_*,a]$ onto $[0,\infty)$; is strictly decreasing, and has an inverse $Q_1^{-1}(r,a)$, which maps $[0,\infty)$ onto $(w_*,a]$. 
\item For any $a \in (-\infty, w_*)$ the function
\begin{equation}\label{eq:Q2_def}
w \mapsto Q_2(w,a) =  \int_a^{w} \frac{d\zeta}{H(\zeta)},
\end{equation}
maps $[a,w_*)$ onto $[0,\infty)$; is strictly increasing, and has an inverse $Q_2^{-1}(r,a)$, which maps $[0,\infty)$ onto $[a,w_*)$. 
\end{enumerate}
\end{lem}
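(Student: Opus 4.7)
The plan is to treat the two parts symmetrically, as both reduce to analyzing an integral of $1/H$ over a half-open interval with $w_*$ as the open endpoint, and the argument rests entirely on the sign of $H$ and the non-degeneracy at $w_*$.

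First I would establish the sign of $H$ on the relevant intervals. Since $H$ is convex on $(-\infty,\wmax]$ with $H(w_*) = 0$, $H'(w_*) < 0$, and $w_*$ is the \emph{unique} root, convexity implies $H'$ is increasing, so $H'(w) \le H'(w_*) < 0$ for $w \le w_*$, giving that $H$ is strictly decreasing on $(-\infty,w_*)$ and in particular $H > 0$ there. Uniqueness of the root together with $H(\wmax) < 0$ then forces $H < 0$ on $(w_*,\wmax]$. In particular, $\zeta \mapsto 1/H(\zeta)$ is continuous and of constant sign on each of $(w_*,\wmax]$ (negative) and $(-\infty,w_*)$ (positive), so both integrands in \eqref{eq:Q1_def} and \eqref{eq:Q2_def} are strictly positive and continuous away from $w_*$, making $Q_1(\cdot,a)$ and $Q_2(\cdot,a)$ well defined and finite on any compact sub-interval that excludes $w_*$.

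Next I would verify strict monotonicity by differentiating under the integral. For part (a), $\partial_w Q_1(w,a) = 1/H(w) < 0$ on $(w_*,a]$, hence $Q_1(\cdot,a)$ is continuous and strictly decreasing, with $Q_1(a,a) = 0$. For part (b), $\partial_w Q_2(w,a) = 1/H(w) > 0$ on $[a,w_*)$, hence $Q_2(\cdot,a)$ is continuous and strictly increasing, with $Q_2(a,a) = 0$. Existence of a continuous strictly monotone inverse onto the image then follows from the intermediate value theorem, once the images are identified as $[0,\infty)$.

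The only non-routine point is to show that both integrals diverge as $w \downarrow w_*$ (respectively $w \uparrow w_*$); this will use the hypothesis $H'(w_*) < 0$ in an essential way. By continuous differentiability, for any fixed $0 < \delta < |H'(w_*)|/2$ there is a neighbourhood of $w_*$ on which $|H(\zeta)| \le (|H'(w_*)| + \delta)\,|\zeta - w_*|$, so $1/|H(\zeta)| \ge C/|\zeta - w_*|$ for some $C > 0$; since $\int_{w_*}^{w_* + \epsilon} d\zeta/(\zeta - w_*) = +\infty$, the integrals $Q_1(w,a)$ and $Q_2(w,a)$ blow up to $+\infty$ as $w \to w_*$. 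Combined with the values $Q_i(a,a) = 0$ and strict monotonicity, the intermediate value theorem gives that the images are exactly $[0,\infty)$ in both cases, yielding the claimed inverses. The main (small) obstacle is this divergence estimate at $w_*$; everything else is continuity, monotonicity, and the sign analysis from convexity.
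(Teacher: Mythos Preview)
Your proposal is correct and follows essentially the same approach as the paper: establish the sign of $H$ on each side of $w_*$, deduce monotonicity and $Q_i(a,a)=0$, and show divergence at $w_*$ by comparing $1/|H(\zeta)|$ with $C/|\zeta-w_*|$. The only cosmetic difference is that the paper obtains the linear bound from the convexity inequality $H(w)\ge H'(w_*)(w-w_*)$ directly (a global tangent-line estimate), whereas you use the $C^1$-Taylor estimate locally near $w_*$; both yield the same divergent comparison integral.
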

\begin{rem}Analogous to \eqref{eq:Q1_def}, we denote by $\overline{Q}_1$ the function
\begin{equation}\label{eq:Q1check_def}
w \mapsto \overline{Q}_1(w,a) = -\int_w^a\frac{d\zeta}{\overline{H}(\zeta)},
\end{equation}
where $\overline{H}$ is the decreasing envelope of $H$.
\end{rem}
\begin{proof}To show (a), note that the integrand $-1/H(\zeta)$ is strictly positive on $(w_*,a)$. It follows that $Q_1(.,a)$ is strictly decreasing and maps $(w_*,a]$ \emph{into} $[0,\infty)$. It remains to show that range of this map covers all of $[0,\infty)$. To this end, observe that by convexity we have
\begin{equation}\label{eq:H_convex}
H(w) \ge H'(w_*) (w - w_*), \qquad \text{for all} \; w \in (-\infty,\wmax],
\end{equation}
and $H'(w_*) < 0$. Thus, we obtain
\[\lim_{w \downarrow w_*}Q_1(w,a) = -\int_{w_*}^a \frac{d\zeta}{H(\zeta)} \ge -\frac{1}{H'(w_*)} \int_{w_*}^a \frac{d\zeta}{\zeta - w_*} = +\infty.\]
The proof of (b) is analogous; only the different sign of $H$ on $(-\infty,w_*)$ has to be taken into account.
\end{proof}

\begin{thm}\label{thm:Volterra}
Let $\kappa$ be an $L_1$-kernel in the sense of Definition~\ref{def:kernel} and let $H$ be a convex function that satisfies Assumption~\ref{ass:convex}; in particular $H(w_*) = 0$ is its unique root in $(-\infty,\wmax]$. For any continuous function $a: \Rplus \to (-\infty,\wmax]$ consider the non-linear Volterra equation
\begin{equation}\label{eq:Volterra}
f(t) = a(t) + \int_0^t \kappa(t-s) H(f(s)) ds, \qquad t \in \Rplus.
\end{equation}
\begin{enumerate}[(a)]
\item If $a$ is increasing with values in $(w_*,w_0]$ then \eqref{eq:Volterra} has a unique global solution $f$ which satisfies
\begin{equation}\label{eq:f_estim}
w_* < r_1(t) \le f(t) < a(t), \qquad \forall\,t  > 0,
\end{equation}
where $r_1(t) = Q_1^{-1}\left(\int_0^t \kappa(s) ds, a(0)\right)$ and $Q_1$ is given by \eqref{eq:Q1_def}.
\item If $a \equiv w_*$ then $f \equiv w_*$ is the unique global solution of \eqref{eq:Volterra}
\item If $a$ is decreasing with values in $(-\infty,w_*)$ then \eqref{eq:Volterra} has a unique global solution $f$ which satisfies
\begin{equation}\label{eq:f_estim2}
a(t) < f(t) \le r_2(t) < w_*, \qquad \forall\,t  > 0,
\end{equation}
where $r_2(t) = Q_2^{-1}\left(\int_0^t \kappa(s) ds,a(0)\right)$ and $Q_2$ is given by \eqref{eq:Q2_def}.
\end{enumerate}
In addition, case (a) can be extended to the following more general statement:
\begin{enumerate}[(a')]
\item If $a$ is increasing with values in $(w_*,\wmax]$ then \eqref{eq:Volterra} has a unique global solution $f$ which satisfies
\begin{equation}\label{eq:f_estim3}
w_* < \overline{r}_1(t) \le f(t) < a(t), \qquad \forall\,t  > 0,
\end{equation}
where $\overline{r}_1(t) = \overline{Q}_1^{-1}\left(\int_0^t \kappa(s) ds, a(0)\right)$ and $\overline{Q}_1$ is given by \eqref{eq:Q1check_def}.
\end{enumerate}
\end{thm}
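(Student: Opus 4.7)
The plan is to reduce Theorem~\ref{thm:Volterra} to standard local theory for nonlinear Volterra equations (\cite[Ch.~12--13]{gripenberg1990volterra}) combined with a priori bounds that both keep $f$ inside the domain of $H$ and yield the claimed estimates. Since $H$ is $C^1$ (hence locally Lipschitz) on $(-\infty,\wmax]$ and $\kappa$ is an $L_1$-kernel, \cite[Thm.~12.1.1]{gripenberg1990volterra} gives a unique continuous maximal solution $f$ on some interval $[0,T_{\max})$; establishing the a priori bounds will force $T_{\max} = +\infty$ and simultaneously prove the estimates. Part~(b) is immediate: $f \equiv w_*$ solves \eqref{eq:Volterra} when $a \equiv w_*$, and uniqueness follows from \cite[Thm.~13.1.2]{gripenberg1990volterra}.

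For Part~(a), the upper bound $f(t) < a(t)$ follows from a bootstrap. As long as $f(t) \in (w_*, w_0]$, continuity forces $H(f) \le 0$, hence
\[f(t) - a(t) = \int_0^t \kappa(t-s) H(f(s))\,ds \le 0,\]
so $f(t) \le a(t) \le w_0$ is preserved; strict inequality uses that $H(f)$ is strictly negative whenever $f > w_*$ and that $\int_0^t \kappa(t-s)\,ds > 0$ for $t > 0$.

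The lower bound $f \ge r_1 > w_*$ is the main technical obstacle. The candidate $r_1$ is well defined by Lemma~\ref{lem:H_prelim}, satisfies $r_1(0) = a(0)$, and by implicit differentiation of $Q_1(r_1(t), a(0)) = \int_0^t \kappa(s)\,ds$ solves the scalar ODE $r_1'(t) = \kappa(t) H(r_1(t))$, or equivalently the integral identity
\[r_1(t) = a(0) + \int_0^t \kappa(s) H(r_1(s))\,ds;\]
the strict bound $r_1(t) > w_*$ on $[0,\infty)$ follows from the divergence $Q_1(w, a(0)) \to +\infty$ as $w \downarrow w_*$. My plan is to prove $f \ge r_1$ by contradiction, letting $t^* := \inf\{t > 0 : f(t) \le r_1(t)\}$, comparing the two integral representations at $t^*$, and exploiting $a(t^*) \ge a(0)$ together with the monotonicity of $r_1$ (decreasing) and of $H$ on $(-\infty, w_0]$. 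The difficulty is that the representations have different kernel orderings --- $\kappa(t-s)$ against $f(s)$ versus $\kappa(s)$ against $r_1(s)$ --- so the point-wise sign comparison requires the change of variables $s \to t^* - s$ in the $r_1$-identity combined with the estimate $r_1(t^* - s) \ge r_1(t^*) = f(t^*)$ from the monotonicity of $r_1$, together with the implied inequality $H(r_1(t^*-s)) \le H(f(t^*)) \le H(f(s))$ on the branch where both solutions lie in $(w_*,w_0]$.

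Part~(c) is symmetric to Part~(a) with signs reversed and $Q_2$ in place of $Q_1$. For Part~(a'), I would replace $H$ by its decreasing envelope $\overline{H}$ from Definition~\ref{def:dec_env}, which is globally decreasing on $(-\infty,\wmax]$, satisfies Assumption~\ref{ass:convex}, and shares the root $w_*$. A standard monotone-iteration comparison (using $\overline{H} \le H \le 0$ and $\kappa \ge 0$) shows the corresponding solution $\overline{f}$ satisfies $\overline{f} \le f$; since $\overline{H}$ is globally monotone, the argument of Part~(a) applies to the $\overline{H}$-equation with any increasing $a$ valued in $(w_*, \wmax]$, giving $\overline{f} \ge \overline{r}_1$ and hence $f \ge \overline{r}_1$. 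The upper bound $f < a$ carries over verbatim from Part~(a). The lower-bound comparison in Part~(a) is the only non-routine step in the whole argument.
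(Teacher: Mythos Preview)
Your upper-bound argument and the handling of Parts~(b), (c), and the envelope reduction in~(a') are fine in outline, but the core lower-bound comparison in Part~(a) has a genuine gap. After your change of variables one has
\[
f(t^*)-r_1(t^*)=\bigl(a(t^*)-a(0)\bigr)+\int_0^{t^*}\kappa(t^*-s)\bigl[H(f(s))-H(r_1(t^*-s))\bigr]\,ds,
\]
and a contradiction requires the integrand to be nonnegative, i.e.\ $f(s)\le r_1(t^*-s)$. Your chain $H(r_1(t^*-s))\le H(f(t^*))\le H(f(s))$ asserts in its second step that $f(t^*)\ge f(s)$ for all $s\in[0,t^*]$, but nothing guarantees monotonicity of $f$: at $s=0$ one has $f(0)=a(0)=r_1(0)>r_1(t^*)=f(t^*)$, so the inequality fails near $s=0$ and the integrand is negative there. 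A direct first-crossing argument of this type does not close, precisely because the two integral representations mix the kernel arguments $t-s$ and $s$ in an incompatible way.

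The paper avoids this by introducing the two-parameter auxiliary function
\[
v(t,T):=a(T)+\int_0^t \kappa(T-s)\,H(f(s))\,ds,\qquad 0\le t\le T,
\]
which satisfies $v(t,t)=f(t)$ and the \emph{ODE-type} differential inequality $\partial_t v(t,T)\ge \kappa(T-t)\,H(v(t,T))$ (using $f(t)\le v(t,T)$ and monotonicity of $H$ on $(w_*,w_0]$). Standard scalar ODE comparison then yields $v(t,T)\ge r(t,T):=Q_1^{-1}\bigl(\int_0^t\kappa(T-s)\,ds,\,a(0)\bigr)$, and letting $T\downarrow t$ recovers $f(t)\ge r_1(t)$. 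The crucial step $f(t)\le v(t,T)$ uses that $\kappa$ is \emph{decreasing}, so that $\kappa(t-s)H(f(s))\le\kappa(T-s)H(f(s))$ since $H(f(s))<0$; this monotonicity of $\kappa$ is an implicit hypothesis used throughout the paper (cf.\ Assumption~\ref{ass:sigma}) though not restated in the theorem. Your proposal never invokes it, which is a further sign that the direct route was not going to work. For Part~(a') the paper does not compare with a separate $\overline H$-solution $\overline f$ as you suggest; it simply replaces $H$ by $\overline H$ in the differential inequality for $v(t,T)$, which is valid because $\overline H\le H$ and $\overline H$ is globally decreasing.
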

\begin{rem}Clearly, if $H$ is decreasing (and hence $w_0 = \wmax$), cases (a) and (a') coincide. In the general case (a) gives better bounds on $f$ than (a'), but is more restrictive in its assumption on the function $a$. 
\end{rem}

Before proving the theorem, we add two Corollaries that are used in the proofs of Theorems~\ref{thm:main}, \ref{thm:AFI} and \ref{thm:multi_conv}.

\begin{cor}\label{cor:Volterra}
Under the assumptions of Theorem~\ref{thm:Volterra}, consider the non-linear integral equation
\begin{equation}\label{eq:Volterra_outer}
g(t) = H\left(a(t) + \int_0^t \kappa(t-s) g(s) ds\right), \qquad t \in \Rplus.
\end{equation}
\begin{enumerate}[(a)]
\item If $a$ is increasing with values in $(w_*,w_0]$ then \eqref{eq:Volterra_outer} has a unique global solution $g$ which satisfies
\begin{equation}\label{eq:g_estim}
H(a(t)) < g(t) \le H(r_1(t)) < 0, \qquad \forall\,t  > 0.
\end{equation}
\item If $a \equiv w_*$ then $g \equiv 0$ is the unique global solution of \eqref{eq:Volterra_outer}.
\item If $a$ is decreasing with values in $(-\infty,w_*)$ then \eqref{eq:Volterra_outer} has a unique global solution $g$ which satisfies
\begin{equation}\label{eq:g_estim2}
0 < g(t) \le H(r_2(t)) < H(a(t)), \qquad \forall\,t  > 0.
\end{equation}
\end{enumerate}
In addition, case (a) can be extended to: 
\begin{enumerate}[(a')]
\item If $a$ is increasing with values in $(w_*,\wmax]$ then \eqref{eq:Volterra_outer} has a unique global solution $g$ which satisfies
\begin{equation}\label{eq:g_estim3}
g(t) < 0, \qquad \forall\,t  > 0.
\end{equation}
\end{enumerate}
In any of the above cases, $g(t) = H(f(t))$, where $f$ is the solution of \eqref{eq:Volterra}.
\end{cor}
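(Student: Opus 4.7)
The plan is to reduce the corollary to Theorem~\ref{thm:Volterra} via the substitution $g = H \circ f$, exploiting a one-to-one correspondence between (continuous) solutions of the inner equation \eqref{eq:Volterra} and the outer equation \eqref{eq:Volterra_outer}. Concretely, if $f$ is a continuous solution of \eqref{eq:Volterra}, then setting $g(t) := H(f(t))$ and inserting the right-hand side of \eqref{eq:Volterra} inside $H$ gives
\[
g(t) \;=\; H\!\Bigl(a(t) + \int_0^t \kappa(t-s)\,H(f(s))\,ds\Bigr) \;=\; H\!\Bigl(a(t) + \int_0^t \kappa(t-s)\,g(s)\,ds\Bigr),
\]
so $g$ solves \eqref{eq:Volterra_outer}. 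Conversely, given a continuous solution $g$ of \eqref{eq:Volterra_outer}, I define $f(t) := a(t) + \int_0^t \kappa(t-s) g(s)\,ds$; continuity of $\kappa$ and $g$ (together with local integrability of $\kappa$) make $f$ continuous, we have $g = H \circ f$ by \eqref{eq:Volterra_outer}, and substituting back yields \eqref{eq:Volterra} for $f$.

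Once this bijection is established, existence and uniqueness in each of the cases (a), (b), (c), (a$'$) of Corollary~\ref{cor:Volterra} follow directly from the corresponding cases of Theorem~\ref{thm:Volterra}: the unique $f$ produces a unique $g = H \circ f$, and any second solution of \eqref{eq:Volterra_outer} would, via the converse direction of the bijection, give a second solution of \eqref{eq:Volterra}, contradicting uniqueness. This simultaneously yields the final assertion $g(t) = H(f(t))$.

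The remaining work is to translate the pointwise bounds on $f$ from Theorem~\ref{thm:Volterra} into the claimed bounds on $g$ by applying $H$ and tracking its monotonicity and sign. In case (a), the values $f(t) \in (r_1(t), a(t)) \subset (w_*, w_0]$ lie in the region where $H$ is strictly decreasing (since $H$ is convex with minimum at $w_0$) and where $H < 0$ (since $H(w_*) = 0$ and $H$ falls to its negative minimum at $w_0$); applying $H$ reverses the chain and produces the bounds \eqref{eq:g_estim}. Case (b) is immediate from $H(w_*) = 0$. In case (c), $f(t) \in (a(t), r_2(t)) \subset (-\infty, w_*)$, where $H$ is strictly decreasing (since $H'(w_*) < 0$ and $H$ is convex, so $H' < 0$ on $(-\infty, w_*]$) and strictly positive; applying $H$ yields \eqref{eq:g_estim2}. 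In case (a$'$), $H$ is not necessarily monotone on the whole of $(w_*, w_{\max}]$, so the two-sided bound is lost, but the sign bound $g(t) < 0$ survives because $H < 0$ throughout $(w_*, w_{\max}]$ (as $H(w_*) = 0$ is the unique root and $H(w_{\max}) < 0$ rules out another sign change).

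I do not anticipate a serious obstacle. The only mild subtleties are (i) verifying that the reconstruction $f(t) = a(t) + (\kappa \star g)(t)$ in the converse direction of the bijection remains in $\operatorname{dom}(H) = (-\infty, w_{\max}]$ so that $H(f(t))$ is defined, which is automatic since $f$ is precisely the argument of $H$ in \eqref{eq:Volterra_outer}; and (ii) being careful with the direction of inequalities in case (c), where $H$ is both decreasing and positive, so applying $H$ flips the ordering while leaving the sign positive. All estimates reduce to Theorem~\ref{thm:Volterra} by elementary bookkeeping.
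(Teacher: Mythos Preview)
Your approach is essentially identical to the paper's: build $g := H\circ f$ from the solution $f$ of \eqref{eq:Volterra}, recover $f$ from any solution $g$ via $f(t) = a(t) + (\kappa\star g)(t)$ to get uniqueness, and then push the bounds on $f$ through $H$ using its monotonicity on $(-\infty,w_0]$ and its sign on $(w_*,w_{\max}]$. The only cosmetic slip is that the bounds on $f$ in Theorem~\ref{thm:Volterra} are half-open (e.g.\ $r_1(t)\le f(t)<a(t)$), so the corresponding inequalities for $g$ should be non-strict on one side; otherwise your argument matches the paper's line for line.
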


\begin{cor}\label{cor:Volterra_ic}
Let the assumptions of Theorem~\ref{thm:Volterra} hold with $\wmax = 0$. Let $\Delta > 0$ and let $h$ be a piecewise continuous function from $[0,\Delta)$ to $\Rminus$. Consider the non-linear integral equation
\begin{align}\label{eq:Volterra_init}
g(t) &= H\left(\int_0^t \kappa(t-s) g(s) ds\right), \qquad t \in [\Delta, \infty),
\intertext{with initial condition}
g(t) &= h(t), \qquad t \in [0,\Delta).\notag
\end{align}
If $w_* <  \int_0^\Delta \kappa(\Delta - s) h(s)ds$, then \eqref{eq:Volterra_init} has a unique global solution $g$ taking values in $\Rminus$, which satisfies  
\begin{equation}\label{eq:Delta_estim}
w_* <  \int_0^t \kappa(t - s) g(s)ds \quad \text{for all $t \ge 0$.}
\end{equation} 
\end{cor}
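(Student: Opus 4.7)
The plan is to transform \eqref{eq:Volterra_init} into an equation of the form \eqref{eq:Volterra_outer} by time-shifting past the initial window. Set $\tilde g(t) := g(t+\Delta)$ for $t\ge 0$ and define the forcing
\begin{equation*}
a(t) \;:=\; \int_0^\Delta \kappa(t+\Delta-s)\,h(s)\,ds, \qquad t\ge 0.
\end{equation*}
Splitting the convolution integral in \eqref{eq:Volterra_init} at $s=\Delta$ and substituting $u=s-\Delta$ rewrites the equation as
\begin{equation*}
\tilde g(t) \;=\; H\!\bigl(a(t) + (\kappa\star\tilde g)(t)\bigr), \qquad t\ge 0,
\end{equation*}
which is of the form \eqref{eq:Volterra_outer}. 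By continuity of $\kappa$ on $(0,\infty)$, piecewise continuity of $h$, and dominated convergence, $a$ is continuous on $\Rplus$; moreover $a\le 0$ (as $h\le 0$, $\kappa\ge 0$) and $a(0)>w_*$ by hypothesis.

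Next, I would invoke the classical Volterra theory (Thm.~13.1.1 of \cite{gripenberg1990volterra}) to obtain a unique continuous maximal solution $\tilde g$ on some interval $[0,T^*)$: this is applicable because $H$ is $C^1$ (hence locally Lipschitz) on $(-\infty,0]$, $\kappa$ is an $L_1$-kernel, and $a$ is continuous. Initialization gives $\tilde g(0)=H(a(0))\in[H(0),0)\subset\Rminus$.

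The main obstacle is to upgrade local to global existence by establishing the a priori bound
\begin{equation*}
F(t)\;:=\;a(t)+(\kappa\star\tilde g)(t)\;>\;w_* \qquad \text{for all } t\in[0,T^*),
\end{equation*}
which simultaneously forces $\tilde g=H(F)\le 0$ and rules out $T^*<\infty$ via the standard continuation criterion. Because $a$ need not be monotone, Corollary~\ref{cor:Volterra} does not apply directly, but its proof strategy adapts. I would argue by contradiction: if $t_*:=\inf\{t\in[0,T^*):F(t)\le w_*\}$ were finite, continuity would give $F(t_*)=w_*$ and hence $\tilde g(t_*)=H(w_*)=0$. Comparing $F$ on $[0,t_*]$ with the unique solution $\bar f$ of the majorized equation driven by the constant forcing $\bar a\equiv a(0)\ge a(t)$---for which Theorem~\ref{thm:Volterra}(a$'$) guarantees $\bar f(t)>w_*$ for all $t$---via a Gronwall-type estimate built from the local Lipschitz bound on $H$ near $[\bar f(t_*),0]$ should yield $F(t_*)\ge \bar f(t_*)>w_*$, contradicting $F(t_*)=w_*$.

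The hardest technical step will be making this comparison fully rigorous, since the perturbation $a-\bar a$ has no definite sign and standard comparison principles for nonlinear Volterra equations typically require monotonicity. An alternative that sidesteps the issue is to linearize $H$ near $w_*$ using its convexity ($H(w)\ge H'(w_*)(w-w_*)$ on $(-\infty,0]$), then bootstrap a lower bound on $F-w_*$ through the resolvent of $|H'(w_*)|\kappa$. Once the bound $F>w_*$ is in place, global existence on $\Rplus$ is immediate, uniqueness follows from the local Lipschitz property of $H$, and \eqref{eq:Delta_estim} for $t\ge\Delta$ translates directly to $F(t-\Delta)>w_*$.
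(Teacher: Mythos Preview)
Your transformation is exactly the one in the paper: shift time by $\Delta$, absorb the initial segment into the forcing
\[
a(t) = \int_0^\Delta \kappa(t+\Delta-s)\,h(s)\,ds,
\]
and reduce to an equation of type \eqref{eq:Volterra_outer}. But you then take a detour that the paper avoids, because you have overlooked the decisive property of $a$: it is \emph{increasing}. Indeed, under the standing assumptions of Theorem~\ref{thm:Volterra} (whose proof uses that $\kappa$ is decreasing), for $t_1 < t_2$ and $s \in [0,\Delta)$ one has $\kappa(t_1+\Delta-s) \ge \kappa(t_2+\Delta-s)$; multiplying by $h(s)\le 0$ reverses the inequality, and integrating gives $a(t_1)\le a(t_2)$. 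Combined with $a\le 0$ and $a(0)>w_*$, this places $a$ squarely in the hypotheses of Theorem~\ref{thm:Volterra}(a$'$)/Corollary~\ref{cor:Volterra}(a$'$), which then delivers global existence, uniqueness, and the bound $w_* < a(t) + (\kappa\star\tilde g)(t)$ in one stroke---no ad hoc comparison argument needed.

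Your proposed workaround is therefore unnecessary, and in fact contains a sign error that would derail it: you write ``the constant forcing $\bar a\equiv a(0)\ge a(t)$'', but since $a$ is increasing the correct inequality is $a(0)\le a(t)$. This flips the direction of any comparison you would try to run against the constant-forcing solution, so the Gronwall/resolvent bootstrap you sketch would not go through as stated. Once you insert the monotonicity observation, the proof collapses to a direct citation of case (a$'$), exactly as the paper does.
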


We start with the proof of Theorem~\ref{thm:Volterra}, which closely follow the account of Lakshmikantham's comparison method in \cite[Sec.~II.7]{bainov2013integral}.
\begin{proof}[Proof of Theorem~\ref{thm:Volterra}]
Clearly, $H$ can be extended to a continuous function on all of $\RR$ and thus it follows from \cite[Thm.~12.1.1]{gripenberg1990volterra} that \eqref{eq:Volterra} has a local continuous solution $f$ on an interval $[0,\Tmax)$ with $\Tmax > 0$. In addition, $\Tmax$ can be chosen maximal, in the sense that the solution cannot be continued beyond $[0,\Tmax)$.\\
\textbf{Case (a):} By assumption, $a$ is increasing and takes values in $(w_*,w_0]$. Set
\begin{equation}\label{eq:T0_def}
T_* := \inf \set{t \in (0,\Tmax): f(t) = w_* \text{ or } f(t) = a(\Tmax)}
\end{equation}
and note that $T_* >  0$. From \eqref{eq:Volterra} it is clear that 
\begin{equation}\label{eq:ineq_1}
f(t) = a(t) + \int_0^t \kappa(t-s) H(f(s)) ds < a(t) \le a(\Tmax), \quad \forall t \in [0,T_*),
\end{equation}
i.e. the lower bound $w_*$ in \eqref{eq:T0_def} is always hit before the upper bound $a(\Tmax)$. In addition, using that the kernel $\kappa$ is decreasing, we obtain that
\begin{equation}\label{eq:v_def}
f(t) = a(t) + \int_0^t \kappa(t-s) H(f(s)) ds \le a(T) + \int_0^t \kappa(T-s) H(f(s)) ds := v(t,T)
\end{equation}
for all $0 \le t \le T \le T_*$. The function $v(t,T)$, which we have just defined,  satisfies
\begin{align}
v(t,t) &= f(t)\label{eq:v_at_t}\\
v(0,T) &= a(T) \ge a(0)\label{eq:v_initial}
\end{align}
and the differential inequality 
\begin{equation}\label{eq:diff_ineq}
\pd{}{t} v(t,T) = \kappa(T-t) H(f(t)) \ge \kappa(T-t) H(v(t,T)).
\end{equation}
Here, we have used \eqref{eq:v_def} and the fact that $H$ is decreasing on $(w_*,w_0]$. Together with the initial estimate \eqref{eq:v_initial}, a standard comparison principle for differential inequalities (cf. \cite[II.\S 9]{walter1996ordinary}) yields
\begin{align}
v(t,T) &\ge r(t,T), \label{eq:ineq_v}\\
\intertext{where}
\pd{}{t}r(t,T) &= \kappa(T-t) H(r(t,T)), \qquad r(0,T) = a(0).\label{eq:diffeq_r}\end{align}
We claim that the differential equation \eqref{eq:diffeq_r} is solved by 
\begin{equation}\label{eq:Q1_sol}
r(t,T) = Q_1^{-1}\left(\int_0^t \kappa(T-s) ds,a(0)\right).
\end{equation}
Indeed, applying $Q_1(.,a(0))$ to both sides of \eqref{eq:Q1_sol} yields
\[\int_0^t \kappa(T-s)ds = Q_1(r(t,T),a(0)) = - \int_{r(t,T)}^{a(0)} \frac{d\zeta}{H(\zeta)}.\]
Taking $\pd{}{t}$-derivatives, we obtain
\[\kappa(T-t) = \frac{1}{H(r(t,T))} \pd{}{t} r(t,T)\]
which is equivalent to \eqref{eq:diffeq_r}. From \eqref{eq:v_def}, \eqref{eq:v_at_t} and \eqref{eq:ineq_v} we obtain the bound
\begin{equation}\label{eq:lower_bd_f}
r_1(t) := \lim_{T \downarrow t} r(t,T) \le \lim_{T \downarrow t} v(t,T) = f(t)
\end{equation}
for all $t \inÊ[0,T_*)$. This implies that
\begin{equation}\label{eq:final_comp}
\lim_{t \to T_*} f(t) \ge r_1(T_*) > w_*, 
\end{equation}
which, in light of \eqref{eq:T0_def}, means that $T_* = \Tmax$, i.e. we have shown the bounds \eqref{eq:f_estim} to hold for all $t \in [0,\Tmax)$. However, by \cite[Thm.~12.1.1]{gripenberg1990volterra} $\lim_{t \to \Tmax} |f(t)| = \infty$ whenever $\Tmax < \infty$. We conclude that $\Tmax = \infty$, and hence that $f$ is a global solution of \eqref{eq:Volterra}. Uniqueness follows from \cite[Thm.~13.1.2]{gripenberg1990volterra}.\\
\textbf{Case (b):} By assumption, $a  \equiv w_*$. Since $H(w_*) = 0$ it is clear that $f(t) \equiv w_*$ is a global solution of \eqref{eq:Volterra}. Uniqueness follows from \cite[Thm.~13.1.2]{gripenberg1990volterra}.\\
\textbf{Case (c):} By assumption, $a$ is decreasing and takes values in $(-\infty, w_*]$. This case can be handled analogous to case (a) with the following adaptations: The inequality signs in equations \eqref{eq:ineq_1} -- \eqref{eq:ineq_v} have to be reversed. In \eqref{eq:Q1_sol} $Q_1$ has to be substituted by $Q_2$ and also in \eqref{eq:lower_bd_f} and \eqref{eq:final_comp} the inequalities have to be reversed.\\
\textbf{Case (a'):} The proof of Case (a) applies, except for the following modification: \eqref{eq:diff_ineq} holds only when $v(t,T) \le w_0$, since $H$ is decreasing only on $(-\infty,w_0]$. However, when $v(t,T) > w_0$, we can use the trivial estimate
\[\pd{}{t} v(t,T) = \kappa(T-t) H(f(t)) \ge \kappa(T-t) H(w_0),\]
which can be combined with \eqref{eq:diff_ineq} into
\[\pd{}{t} v(t,T) = \kappa(T-t) H(f(t)) \ge \kappa(T-t) \overline{H}(v(t,T)),\]
where $\overline{H}$ is the decreasing envelope of $H$ from Definition~\ref{def:dec_env}. The remaining proof of Case (a) applies after substituting $H$ by $\overline{H}$ and $Q_1$ by $\overline{Q}_1$.
 \end{proof}

\begin{proof}[Proof of Corollary~\ref{cor:Volterra}] Let $f$ be the global solution of \eqref{eq:Volterra}. Applying $H$ to both sides of \eqref{eq:Volterra}, we see that $g(t) := H(f(t))$ is a global solution of \eqref{eq:Volterra_outer}. Next, we show uniqueness. To this end, assume that $\wt{g}$ is a local solution of \eqref{eq:Volterra_outer} on $[0,T)$, different from $g$ and define
\[\wt{f}(t) := a(t) + \int_0^t \kappa(t-s) \wt{g}(s)ds.\]
Clearly, $\wt{g}(t) = H(\wt{f}(t))$ on $[0,T)$, and hence $\wt{f}$ is a local solution of \eqref{eq:Volterra}. By \cite[Thm.~13.1.2]{gripenberg1990volterra}, this solution is unique, and we conclude that $\wt{f} = f$, and hence also $\wt{g} = g$. Finally, applying $H$ --  which is decreasing on $(-\infty,w_0]$ -- to the inequalities \eqref{eq:f_estim} and \eqref{eq:f_estim2} yields \eqref{eq:g_estim} and \eqref{eq:g_estim2}. In case (a') monotonicity of $H$ is lost, but $H(w) < 0$ for all $w \in (w_*,\wmax]$ yields \eqref{eq:g_estim3}.
\end{proof}

\begin{proof}[Proof of Corollary~\ref{cor:Volterra_ic}] Set
\[a(t) := \int_0^\Delta \kappa(t + \Delta - s)h(s)ds\]
and note that $a$ is increasing with values in $(w_*,0]$. Consider the non-linear Volterra equation
\begin{equation}\label{eq:f_init}
f(t) = a(t) + \int_0^t \kappa(t-s) H(f(s))ds,
\end{equation}
which has a unique global solution $f$ by Theorem~\ref{thm:Volterra}.(a) or (a'). For $t' \in \Rplus$ set
\[g(t') = \begin{cases} H(f(t'-\Delta)), &\qquad t' \in [\Delta, \infty)\\ h(t') &\qquad t' \in [0,\Delta) \end{cases}.\]
For $t' \ge \Delta$ we have
\begin{align*}
g(t') &= H(f(t'-\Delta)) = H\left(\int_0^\Delta \kappa(t' - s) h(s) ds + \int_\Delta^{t'} \kappa(t'-s)g(s)ds\right) = \\
&= H\left(\int_0^{t'} \kappa(t'-s)g(s)ds\right),
\end{align*}
showing that $g$ is a global solution of \eqref{eq:Volterra_init}. From cases (a) or (a') of Theorem~\ref{thm:Volterra}, we obtain the bound
\[w_* < f(t' - \Delta) = \int_0^{t'} \kappa(t'-s)g(s)ds,\]
as claimed. To show uniqueness, assume that $\wt{g}$ is a solution of \eqref{eq:Volterra_init}, different from $g$. Setting
\[\wt{f}(t) := a(t) + \int_0^t \kappa(t-s) \wt{g}(s+\Delta),\]
we see that $\wt{f}$ is a solution of \eqref{eq:f_init} and conclude from Theorem~\ref{thm:Volterra} that $\wt{f} = f$ and hence also $\wt{g} = g$. 
\end{proof}

\section{Theorem~\ref{thm:main} in the case $\rho > 0$}\label{app:proof}
We provide the remaining part of the proof of Theorem~\ref{thm:main} in the case $\rho > 0$. Our starting point is the quadratic equation \eqref{eq:quadratic}, which has been obtained without any assumption on the sign of $\rho$. In the case $\rho > 0$, additional arguments are needed, since this equation may have \emph{two} negative solutions. 
\begin{proof}[Theorem~\ref{thm:main}, `only if' part in the case $\rho > 0$.]
On the set $S = \set{(t,\omega): V_t(\omega) \neq 0}$ we set
\[k_t(\tau,\omega) = \frac{1}{V_t(\omega)}\eta_t(t+\tau,\omega),\]
for $\tau \ge 0$.  Note that by Assumption~\ref{ass:sigma} $\tau \mapsto k_t(\tau,\omega)$ must be a decreasing $L_2$-kernel for a.e. $(t,\omega)$. Since \eqref{eq:beta_sigma} holds trivially if $S$ is a $dt \otimes d\PP$-nullset, we may, without loss of generality, assume that $S$ is not a nullset and consider only $(t, \omega) \in S$ in the remainder of the proof. 
Inserting into \eqref{eq:v} yields
\[h_t(t + \tau,u) = \sqrt{V_t} \int_0^\tau g(\tau-s,u) k_t(s) ds = \sqrt{V_t} \cdot (g \star k)_t(\tau,u).\]
Plugging into \eqref{eq:quadratic} and eliminating $V_t$ gives
\begin{equation}\label{eq:quadratic_repeat}
\frac{1}{2}(u^2 - u) - g(\tau,u) + u \rho (g \star k)_t(\tau,u) + \frac{1}{2} (g \star k)_t(\tau,u)^2 = 0,
\end{equation}
with is a quadratic equation in the variable $(g \star k)_t(\tau,u)$ with two solutions
\begin{equation}\label{eq:q_pm}
q_\pm(\tau,u) = -\rho u \pm \sqrt{u^2(\rho^2 - 1) + u + 2g(\tau,u)},
\end{equation}
both of which may be negative. However, using continuity of $g(.,\tau)$ and evaluating \eqref{eq:quadratic_repeat} at $\tau \downarrow 0$ yields that $g(0,u) = \frac{1}{2}(u^2 - u)$. Inserting into \eqref{eq:q_pm} and using $(g \star k)_t(0,u) = 0$ this initially selects the solution $q_+$ and shows that
\[(g \star k)_t(\tau,u) = q_+(\tau,u), \quad \text{for all } \tau \in [0,T_*(u)),\] 
where $T_*(u)$ is the first collision time of $q_+$ and $q_-$, i.e. ,
\[T_*(u) := \inf \set{\tau > 0: g(\tau,u) = \frac{1}{2}(u^2 - u) - \frac{u^2\rho^2}{2}}.\]
On the interval $[0,T_*(u))$ we can proceed as in the case of $\rho \le 0$ and obtain that
\begin{equation}\label{eq:eta_rep}
\eta_t(t+\tau) = \sqrt{V_t} \, k_t(\tau) = \sqrt{V_t} \pd{}{\tau} \left( \int_0^\tau q_+(\tau-s,u) \pi(ds,u) \right), \quad \tau \in [0,T_*(u))
\end{equation}
where $\pi$ is the resolvent of the first kind of $g(\tau,u)$. Therefore, to complete the proof it suffices to show that $T_*(u)$ can be made arbitrarily large by choosing a suitable $u \in (0,1)$. To this end, note that \eqref{eq:quadratic_repeat} is a convolution Riccati equation for $g(.,u)$ with the kernel $k_t(.)$, i.e., 
\[g(\tau,u) = R_V(u,(g \star k)_t(\tau,u)).\]
Applying Lemma~\ref{lem:Riccati} we obtain that $g(.,u)$ is its unique continuous solution, which, by Corollary~\ref{cor:Volterra}, can be written as $g(\tau,u) = R_V(u,f(\tau,u))$, where $f(\tau,u)$ solves
\begin{equation}\label{eq:f_Riccati}
f(\tau,u) = \int_0^\tau k_t(\tau-s) R_V(u,f(s,u)) ds.
\end{equation}
Moreover, the collision time can be represented in terms of $f$ as
\begin{equation}\label{eq:collision_f}
T_*(u) := \inf \set{t > 0: f(t,u) = -\rho u}.
\end{equation}
By convexity, we can estimate $R_V(u,w)$ from below as $R_V(u,w) \ge w\rho + \tfrac{1}{2}(u^2 - u)$.
Hence, from \eqref{eq:f_Riccati} we obtain the estimate
\begin{equation}\label{eq:simple_est}
f(\tau,u) \ge \rho (k_t \star f)(\tau,u) + \tfrac{1}{2}(u^2 - u).
\end{equation}
Let $r_t$ be the $\rho$-resolvent of $k_t$, and note that $r_t$ is again an $L_2$-kernel (in particular non-negative) by Lemma~\ref{lem:resolvent}. By the generalized Gronwall Lemma of \cite[Lem.~9.8.2]{gripenberg1990volterra} it follows that $f(\tau,u) \ge l(\tau,u)$, where $l$ solves the linear Volterra equation
\[l(\tau,u) = \rho (k_t \star l)(\tau,u) + \tfrac{1}{2}(u^2 - u).\]
Moreover, using \cite[Thm.~2.3.5]{gripenberg1990volterra}, we can express $l(t,u)$ in terms of the $\rho$-resolvent $r_t$ and obtain $f(\tau,u) \ge l(\tau,u) = \tfrac{1}{2}(u^2 - u)\int_0^\tau r_t(s)$ for all $u \in (0,1)$. Combining with \eqref{eq:collision_f}, we finally obtain
\[\int_0^{T_*(u)} r_t(s) ds \ge -\frac{\rho u}{\tfrac{1}{2}(u^2 - u)} = \frac{2\rho}{1-u}.\]
Sending $u \uparrow 1$ the right-hand side can be made arbitrarily large and we conclude that $\lim_{u \uparrow 1} T_*(u) = +\infty$, which together with \eqref{eq:eta_rep} completes the proof. 
\end{proof}
\end{appendix}

%-------------------------------
\bibliographystyle{plain}
\bibliography{references}
%-------------------------------
\end{document}